\documentclass{article}

\usepackage{PRIMEarxiv}
\usepackage{amsmath}
\usepackage[utf8]{inputenc} 
\usepackage[T1]{fontenc}    
\usepackage{hyperref}       
\usepackage{url}            
\usepackage{booktabs}       
\usepackage{amsfonts}       
\usepackage{nicefrac}       
\usepackage{microtype}      
\usepackage{lipsum}
\usepackage{fancyhdr}       
\usepackage{graphicx}       
\graphicspath{{media/}}     
\usepackage{float} 

\usepackage{graphicx}
\usepackage{multirow}
\usepackage{algorithm,algpseudocode}
\usepackage{algorithmicx}
\usepackage{subcaption}
\usepackage{float}
\usepackage{amsthm}
\pagestyle{fancy}
\thispagestyle{empty}
\rhead{ \textit{ }} 
\newtheorem{prop}{Proposition}

\fancyhead[LO]{Running Title for Header}

\title{Unsupervised Ensemble Based Deep Learning Approach for Attack Detection in IoT Network}

\author{
 Mir Shahnawaz Ahmed, Shahid M Shah \\
Communication Control  Learning Lab, \\Department of Electronics \& Communication Engineering,\\ National Institute of Technology Srinagar\\
  \texttt{\{mirshahnawaz888@gmail.com, shahidshah@nitsri.net\}} \\
}


\begin{document}
\maketitle

\begin{abstract}
The Internet of Things (IoT) has altered living by controlling devices/things over the Internet. IoT has specified many smart solutions for daily problems, transforming cyber-physical systems (CPS) and other classical fields into smart regions. Most of the edge devices that make up the Internet of Things have very minimal processing power. To bring down the IoT network, attackers can utilise these devices to conduct a variety of network attacks. In addition, as more and more IoT devices are added, the potential for new and unknown threats grows exponentially. For this reason, an intelligent security framework for IoT networks must be developed that can identify such threats. In this paper, we have developed an unsupervised ensemble learning model that is able to detect new or unknown attacks in an IoT network from an unlabelled dataset. The system-generated labelled dataset is used to train a deep learning model to detect IoT network attacks. Additionally, the research presents a feature selection mechanism for identifying the most relevant aspects in the dataset for detecting attacks. The study shows that the suggested model is able to identify the unlabelled IoT network datasets and DBN (Deep Belief Network) outperform the other models with a detection accuracy of 97.5\% and a false alarm rate of 2.3\% when  trained using labelled dataset supplied by the proposed approach.
\end{abstract}

\keywords{IoT, malicious attacks, feature selection, unsupervised ensemble learning, deep learning}

\section{Introduction}
\label{IntroSection}
The Internet of Things (IoT) expands the Internet's edge by integrating new terminal devices and resources at the network's edge. It is one of the fastest-growing and most extensively utilised technologies on the internet. IoT constitutes integration of millions of devices that can communicate with one another to solve a wide range of real-world challenges by providing innovative and cost-effective solutions. Smart metering, smart transportation systems, smart homes, smart medical care, smart agriculture etc. are growing as a result of IoT technology \cite{javed2018internet}. Apart from these applications, IoT devices are widely used for the purpose of surveillance \cite{ciuonzo2021distributed}.

The sensors used in IoT networks/CPS continuously monitor their surroundings, thereby generating huge volumes of data. These massive volumes of data are saved in data-centers, some of which may contain sensitive information about systems/individuals, thus making IoT networks/CPS vulnerable to various cyber attacks. More apparent and unseen attacks are developing like Denial of Service (DoS), Remote to Local (R2L), Brute Force, Probing (Probe), User to Root (U2R), Man-in-the-middle, Scanning, Ransomware, Password attack, etc., wreaking havoc and creating irreversible damage \cite{meneghello2019iot}. Also, the use of large numbers of heterogeneous IoT devices in a network increases the attack surface. Attackers generally target the data-centers because they have large volumes of diverse information. In addition, IoT network devices have limited storage and computational capabilities which prohibit them from detecting and fighting against possible online cyber threats. Also, due to the increase in the attack surfaces for IoT networks, the detection of new/unknown attacks increases the challenge of attack detection \cite{hassan2019current}. IoT networks might be severely harmed by even a moderate security attack, thereby hindering the applicability of IoT networks in different fields. IoT networks generate enormous volumes of network traffic and have more number of edge devices of varying types than the traditional computer networks, making IoT network security concerns more complicated and difficult. Thus, it is evident that a smart and intelligent attack detection mechanism has to be developed which is able to detect even new/unknown attacks in IoT networks. The characteristics of machine learning techniques and their recent advancements make them suitable for the purpose of attack detection in IoT networks. The list of acronyms used in the paper are described in table \ref{acronyumTable}.

The contributions of this paper are summarized as:
\begin{itemize}
\item We propose an amalgamation of correlation coefficient and LASSO regression for the purpose of feature selection to detect attacks in an IoT network. The amalgamation is done after a thorough analysis of the performance metrics and identification of critical parameters for correlation based and LASSO regression based feature selection techniques separately. The fusion of both the techniques is done in such a way that the overall performance of the system is increased by selecting most relevant features from the available network datasets.
\item To identify new/unknown attacks, the model should be able to identify hidden patterns in the network dataset and cluster the network traffic into malicious and non-malicious. To achieve this, we have used a group of clustering algorithms. The output of each clustering algorithm, for a particular IoT network traffic, is combined using a weighted voting technique to predict the class label (malicious/non-malicious) with increased accuracy. The weights associated with the output of each clustering technique have been calculated after a detailed performance analysis. The combination of clustering techniques with the weighted voting gives rise to an ensemble machine learning approach which converts an unlabelled dataset (representing IoT network traffic with malicious attacks) into labelled dataset, thereby proposing an unsupervised mechanism which is able to identify new/unknown attacks in the IoT network traffic.
\item The labelled dataset generated by the proposed model is utilized to train a deep learning model for attack detection in an IoT network. To identify which deep learning model is best suited for the purpose, a performance analysis of different deep learning models (LSTM, MLP and DBN) has been done, and an efficient one is proposed to be deployed for detecting attacks in an IoT network.
\end{itemize} 

\begin{table}
\caption{List of Acronyms} 
\centering
\begin{tabular}
{p{0.1\linewidth}p{0.3\linewidth}p{0.1\linewidth}p{0.3\linewidth}}\hline

Acronyms &
Description&
Acronyms &
Description\\\hline
IoT &
Internet of Things&

RPL & 
Routing Protocol for Low-Power and Lossy Networks\\

CPS &
Cyber-Physical Systems&

DoS &
Denial of Service attack\\

DDoS &
Distributed Denial of Service attack&

R2L &
Remote to Local attack\\

Probe &
Probing attack&

U2R &
User to Root attack\\

LSTM &
Long Short Term Memory&

MLP &
Multilayer Perceptron\\

DBN &
Deep Belief Networks&

PCA &
Principal Component Analysis\\

IDS &
Intrusion Detection System&

CNN &
Convolution Neural Network\\

OPTICS &
Ordering Points to Identify the Clustering Structure&

ARI &
Adjusted Rand Index\\

AMI &
Adjusted Mutual Information&

FAR &
False Alarm Rate\\

MCC &
Matthews Correlation Coefficient&

SGD &
Stochastic Gradient Descent\\

RBM &
Restricted Boltzmann Machine&

ROC &
Receiver Operating Characteristic\\

PR &
Precision-Recall&

DODAG &
Destination Oriented Directed Acyclic Graph\\\hline

\end{tabular}
\label{acronyumTable} 
\end{table}

The rest of the paper is organised as: Section \ref{RelatedWorkSection} describes the recent literature in the direction of attack detection in IoT networks. Section \ref{ProposedModelSection} gives the detailed view of the proposed unsupervised ensemble learning model. The implementation of the proposed model along with the deep learning models used are discussed in section \ref{ResultsSection}. Finally, section \ref{ConclusionSection} concludes the paper and presents future directions.

\section{Related Work}
\label{RelatedWorkSection}
With the advancement in the field of IoT, many researchers have proposed various techniques for detecting malicious behaviour in IoT and hence creating a secure environment for implementing IoT network.

Xiao et al. in \cite{xiao2018iot} discuss various security mechanisms which use machine learning algorithms to achieve the security requirements of an IoT network, but most of them require the accurate knowledge of network and attack state, which is difficult in IoT networks. Furthermore various machine learning based security solutions have high computational complexities and require huge training data.
Lianbing et al. \cite{deng2019mobile} explains various types of Artificial Intelligence based Intrusion Detection techniques to detect malicious behaviours in IoT network and, finally proposed cloud architecture based intrusion detection system, which uses a combination of Fuzzy C-Means algorithm and Principal Component Analysis (PCA) approach. The proposed system has used PCA for dimensionality reduction of used dataset, in order to decrease the attack detection time, and Fuzzy C-Means algorithm is used for clustering purposes.
Shailendra et al. \cite{rathore2018semi} proposed a distributed attack detection framework for fog-based IoT which uses Semi-supervised learning approach. They also have used the Fuzzy C-Means algorithm to enhance the efficiency of the clustering mechanism and used Extreme Learning Machine (a feed-forward neural network with a single hidden layer). The performance of the system has been tested on NSL KDD dataset with an attack detection accuracy of 86.53\% and attack detection time of 11ms, but the attack detection accuracy of the system needs to be enhanced.
Sahay et al. \cite{sahay2018traffic} proposes a Multilayer Perceptron (MLP) based Misappropriation attack detection mechanism for IoT network. The authors describe the Misappropriation attack as decreased rank attack, where an attacker decreases the rank of a malicious network path in a DODAG, thereby deviating the traffic to the network path containing attacker nodes.DODAG is used by RPL for IoT network organization. The proposed approach has been implemented in Cooja simulator. It only detects one type of network attack, and the attack detection accuracy needs to be increased. Also, Sahay et al. \cite{sahay2018efficient} also presented a similar approach for detecting version number attack in RPL.
Mirsky et al. \cite{mirsky2018kitsune} proposes an ANN based unsupervised online IDS, which uses Autoencoders to reconstruct the network traffic for anomaly detection. Even though the authors claim that the proposed model is lightweight, but the resources requirement makes them computational infeasible for IoT networks.
Insider threats are one of the most difficult cyber-security concerns today, and they aren't effectively addressed by most security solutions \cite{homoliak2019insight}. An insider threat is caused by a legitimate user to jeopardise a network's data, policies, or devices in an unwanted or disruptive manner. These attacks are more common in the IoT environment and hard to detect, but intelligent machine learning techniques can be used to detect such attacks effectively \cite{ahmad2021mitigating}.
A hybrid intrusion detection mechanism was proposed by Ansam et al. in \cite{khraisat2019novel} which combines the advantages of signature and anomaly based  intrusion detection systems (IDS). The proposed mechanism first creates a decision tree using C5 Decision Tree Classifier to identify various known attacks based on predefined attack signatures. The unknown attacks are then detected using anomaly based IDS, which is trained using One-class Support Vector Machine. The combination of these two techniques result in a hybrid system which is able to detect DDOS, DOS, Reconnaissance and Key-logging attacks in BoT-IoT intrusion detection dataset. The proposed approach relies on predefined attack signatures for the purpose of training a machine learning model, thus making it an ineffective approach for detecting unknown attacks.
Venkatraman et al. in \cite{venkatraman2020adaptive} have also proposed a hybrid of signature based and anomaly based intrusion detection system, which uses Crowd-sourcing to gather various attack signatures and then use Automata based event processor to automatically detect malicious behaviours. The proposed attack detection system has been tested for smaller IoT networks.
Bovenzi et al. \cite{bovenzi2020hierarchical} proposes a MultiModal Deep AutoEncoder for detecting attacks in Bot-IoT dataset. The model has considerable performance but the use of AutoEncoders increase the complexity of the proposed model.
A blockchain based security framework has also been proposed by Sahay et al. \cite{sahay2020novel}, in which the vulnerabilities of RPL are explored to propose a secure attack detection module for an IoT network using blockchain. The proposed system is computationally complex because of blockchain and hence the complexity needs to be reduced.
Another blockchain based secure data sharing mechanism was proposed by veeramakali et al. \cite{veeramakali2021intelligent}, but here again the overall computational complexity of the system has to be reduced.
A Specification based Intrusion Detection was proposed by Jagadeesh Babu et al. \cite{babu2020sh}, which relies on a supervised learning approach to discover various transaction specifications for normal and abnormal behaviour. The discovered specifications are then further optimized using the proposed heuristics approach, which results in a system capable of identifying normal and abnormal (malicious) behaviour depending on the specifications of the network transactions performed by various IoT devices.
Aaisha et al. have mentioned in \cite{makkar2020efficient} that various devices in IoT networks prefer to search/access data using search engines available on the internet, but this way of accessing the data faces Web Spam as its biggest hindrance. To overcome this challenge they have proposed a deep learning based mechanism, which uses LSTM Network, that when incorporated in the search engine detects spam while calculating the page rank score for a particular web-page.
Eskandari et al. in \cite{eskandari2020passban} propose an Intrusion Detection System, which identifies new attacks (anomalies) by using one-class classification i.e., it analysis the characteristics of normal IoT network traffic and then detects anomalies if the characteristics of network traffic deviate. It uses Isolation Forest for isolating the malicious from non-malicious data, and Local Outlier Factor method for detecting the attacks. However, the false positive rate of the IDS is high. Also, the efficiency of the system has been calculated on a real testbed and the system needs new training sessions each time the underlying network changes.
Verma et al. \cite{verma2020machine} discusses the performance analysis of various supervised machine learning classifiers (Random Forest, Adaboost, Gradient Boosting, Extreme Gradient Boosting and Extremely Randomized Trees) for detecting DoS attack, among which Extreme Gradient Boosting outperforms the other, and also uses Raspberry Pi to find the response time for each classifier. The performance analysis is done using CIDDS-001, UNSW-NB15, and NSL-KDD datasets, by using all the available features. It is not able to detect new/unknown attacks.\\
Sahu et al. in \cite{sahu2021internet} presents a combination of Convolution Neural Network (CNN) and LSTM Model for attack detection in IoT. The proposed model uses CNN for the purposes of feature selection and trains a LSTM model with the selected features to detect attacks in the IoT network. Although the model achieves 92\% accuracy, it is computationally complex and requires more energy resources. Also the model is tested on real testbed with a smaller size of IoT network.
Also, Fotohi et al. in \cite{fotohi2021lightweight} proposes a lightweight clustering mechanism for detecting flooding attacks in IoT environment using ant colony optimization algorithm. The model efficacy has been tested using the NS-3 network simulator.

\begin{table}[H]
\centering
\caption{Comparison of some of the related work with the proposed model} 
\footnotesize
\begin{center}
\begin{tabular}
{p{0.07\linewidth}p{0.27\linewidth}p{0.09\linewidth}p{0.09\linewidth}p{0.08\linewidth}p{0.25\linewidth}}\hline

References &
Description &
Dataset Used &
No. of features used &
Learning Approach used &
Limitations\\\hline

Ambusaidi et al. \cite{ambusaidi2016building} &
The paper proposes a mutual information based feature selection mechanism and uses Least Square SVM to detect network attacks. &
NSL-KDD &
18 &
Supervised &
The mechanism only detects known attacks, and is unable to detect new/unknown network attacks. \\

Shone et al. \cite{shone2018deep} &
It uses stacked encoder phase of a deep auto-encoder, together with a random forest classifier to create a network intrusion detection system. &
NSL-KDD &
All &
Supervised &
It uses all the features available in the dataset and has an attack detection rate of 85.42\% and false alarm rate of 14.58\%, and is not able to detect new/unknown attacks.\\

Meidan et al. \cite{meidan2018n} &
The paper proposes a deep auto-encoder based detection system for IoT networks. It uses normal network traffic to train the model, and an anomaly is detected if the auto-encoder fails to reconstruct the data. &
NA (model tested on real IoT network) &
NA &
Supervised &
The proposed model was trained to detect botnet attacks only, and it is unable to detect new/unknown attacks.\\

Shailendra et al. \cite{rathore2018semi} &
The paper proposes a combination of Fuzzy C-means algorithm with Extreme Learning Machine to detect attacks. &
NSL-KDD &
All &
Semi-supervised &
Even-though the model is able to detect new/unknown attacks, but the model has attack detection accuracy of 86.53\% only. \\

Ansam et al. in \cite{khraisat2019novel} &
It proposes a hybrid IDS, which uses C5 decision tree to detect known attacks and One-class SVM to detect anomalies. &
Bot-IoT &
All &
Semi-supervised &
The proposed approach relies on predefined attack signatures for the purpose of training a machine learning model, thus making it an ineffective approach for detecting unknown attacks.\\

Jan et al. \cite{jan2019toward} &
It evaluates  SVM with different kernel functions to detect network attacks using variation in the traffic intensity. &
Dataset generated from CICIDS2017 using Poisson dist. &
NA &
Supervised &
The proposed model is only tested to detect those attacks which vary in network traffic intensities.\\

Zhou et al. \cite{zhou2020building} &
The paper proposes an ensemble approach that uses voting mechanism to combine the results of C4.5, Random Forest, and Forest by Penalizing Attributes techniques, to create an ensemble mechanism for detecting attacks. It also uses Bat algorithms with correlation based feature selection technique to retrieve relevant features from the dataset. &
NSL-KDD, AWID, and  CICIDS2017 &
10, 8, 13 (resp.)&
Supervised &
Unable to detect new/unknown attacks, and is computationally complex.\\

Srinivas and Patnaik \cite{srinivas2022clustering}&
They have used a metaheuristic quantum worm swarm optimization-based clustering technique to propose a secure routing protocol for mobile ad hoc networks, by selecting an optimal cluster head and secure routing path.&
Mobile ad hoc network created in MATLAB&
All&
Unsupervised&
It only detects routing attacks in mobile ad hoc networks.\\ 

Sahay et al. \cite{sahay2022holistic}&
The paper proposes a holistic approach that uses smart contract-fortified blockchain technology together with LSTM and CNN to detect routing attacks in an IoT network&
IoT traffic from Cooja simulator&
All&
Unsupervised&
It only detects known routing attacks in an IoT network using network specifications.\\\cline{2-6}
\vspace{0.1em}
\textbf{Proposed Work} &
\multicolumn{5}{p{0.9\linewidth}}{\vspace{0.1em}The paper proposes an unsupervised ensemble approach for clustering network traffic into malicious and non-malicious (thereby identifying new/unknown attacks) and uses system identified network traffic to train deep learning model to detect attacks in an IoT network. The proposed mechanism was tested on NSL-KDD and TON-IoT network datasets and has overall attack detection accuracy of 97.6\% and false alarm rate of 2.3\%. The paper also proposes feature selection mechanism (which selects 11 and 9 relevant features from respective datasets).}\\\hline

\end{tabular}
\end{center} 
\label{relatedWorkTable} 
\end{table}

Most of the above cited papers mainly use supervised machine learning approaches, which are efficient in detecting known attacks in an IoT network, but they rely on labelled datasets and are ineffective in detecting new/unknown attacks. Also, due to the increasing use of the IoT network, various new and heterogeneous devices get added in the network, which can be used by an attacker to launch new attacks. Thus, it is necessary to propose an intelligent attack detection system for IoT networks which uses an unsupervised machine learning approach to detect new/unknown attacks in an IoT network. Also, the performance of such systems can be improved by combining ensemble machine learning approaches with deep learning models. The comparison of some of the related work with the proposed model is shown in table \ref{relatedWorkTable}.

\section{Proposed Method}
\label{ProposedModelSection}
In this section we introduce the proposed model for the detection of attacks in an IoT network. The proposed model is shown in figure \ref{systemmodel}, which takes an unlabelled IoT network dataset (containing IoT network traffic: malicious as well as non-malicious) as input. The input unlabelled dataset is first pre-processed for any missing/redundant data. After data pre-processing, the value of each feature is being scaled using a standard scaler. The pre-processed and scaled dataset is then given to the proposed feature selection module to select relevant features for detecting network attacks. After feature selection, each data entry of the dataset (for selected features) is given to three clustering mechanisms (Mini batch K-means, Fuzzy C-means and OPTICS) simultaneously which cluster the data into malicious (represented by 1) or non-malicious (represented by 0). The output of each clustering algorithm is combined using a weighted voting, thereby predicting the most appropriate/accurate class label for the data entry. The combination of clustering algorithms and voting mechanism formulates the proposed ensemble learning model. So, after processing the entire unlabelled dataset by the ensemble learning model, a labelled dataset is generated. This process of converting the unlabelled IoT network dataset into a labelled dataset suffices that any new/unknown IoT network attacks can be also detected by the system. Due to the limited availability of computational resources in IoT devices, the proposed ensemble learning model can be deployed at a cloud layer and the labelled dataset generated by the system can be used to train a deep learning model for detecting new/unknown attacks in an IoT network. Hence at step-7 in figure \ref{systemmodel} we propose to use a deep learning model, which is to be trained using the system generated labelled network dataset. Finally, the trained deep learning model can be deployed in any IoT device at fog or edge layer of fog computing architecture. The deep learning model can be then updated timely at cloud layer using the proposed ensemble learning model for detecting new/unknown attacks that may be encountered in future.

\begin{figure*}[htbp]
\centering
\includegraphics[clip,scale=0.6]{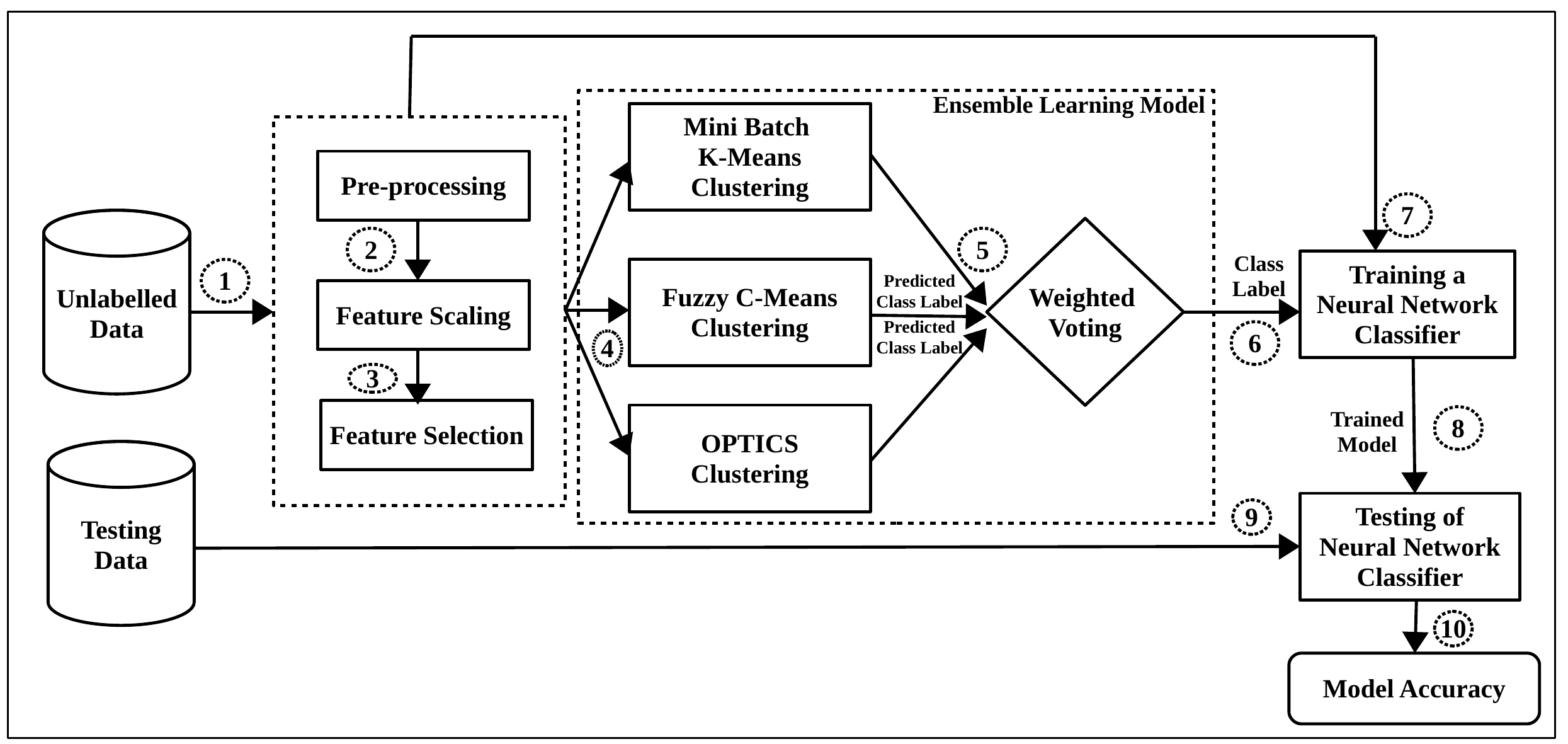}
\caption{Proposed Model}
\label{systemmodel}
\end{figure*}

\begin{table}[p]
\centering
\caption{Nomenclature} 
\begin{tabular}
{p{0.1\linewidth}p{0.4\linewidth}}\hline

Symbol &
Definition\\\hline

$\beta$ &
Regression coefficient\\

$\hat{\beta}$ &
LASSO regression coefficient\\

$\alpha$ &
Regularization parameter\\

$cof(i)$ &
Correlation coefficient for each feature vector $i$\\

$cov(X, y)$ &
Covariance between feature vector $X$ and class label $y$\\

$var(X)$ &
Variance of feature vector $X$\\

$P_i$ &
Predicted output from clustering algorithm $i$\\

$W_i$ &
Weights associated with the prediction $P_i$\\

$V$ &
Combined predicted class label\\

$\hat{V}$ &
Final predicted class label (after weighted voting)\\

$D_{UL}$ &
Unlabelled Dataset\\

$D_L$ &
Labelled Dataset\\

$FS$ &
Selected relevant features from the dataset\\

$\delta$ &
Threshold for correlation coefficient\\

\hline

\end{tabular}
\label{nomenclutureTable} 
\end{table}

\subsection{Feature Selection}
The performance of any machine learning algorithm depends on the selection of appropriate features from the given dataset. Selecting relevant features from a dataset helps to overcome overfitting or under-fitting of machine learning models. In the proposed model, we have used a combination of Correlation-based feature selection with LASSO regression to identify relevant features. In Correlation-based feature selection we select those features which are highly correlated with the class label. We have used Pearson's correlation coefficient \cite{benesty2009pearson} to find the correlation between feature vectors and the class label. The Pearson's correlation coefficient for each feature vector $i$ with the class label ($L$) is calculated using:

\begin{equation}
\label{correlationEq}
cof(i) = \frac{cov(X_i,L)}{\sqrt{(var(X_i)*var(L))}}
\end{equation}

where $X_i$ is the value of $i^{th}$ feature vector, $L$ is the class labels, $cov()$ and $var()$ are the covariance and the variance respectively. For selecting the relevant features, we calculate the absolute value of $cof(i)$ for each feature vector, and then select those features which have $cof(i)$ greater than threshold ($\delta$). In our study, after various iteration of proposed model for varying values of $\delta$, we got the optimal value of $\delta = 0.4$. The nomenclature for the symbols used in the paper are shown in table \ref{nomenclutureTable}.

In LASSO (Least Absolute Shrinkage and Selection Operator) regression some model coefficients are narrowed, while others are set to zero \cite{tibshirani1996regression}. It is this feature which makes it possible to select relevant features from a dataset. So, if $x_i = (x_{i1}, x_{i2}, \dots , x_{ip})^T$ represents the features selected from correlation method and $y_i$ represent the corresponding class label, then the LASSO regression coefficients are calculated as:

\begin{equation}
\label{LassoEq}
\hat{\beta}(i) =  \left[ \sum_{i=1}^{N} \left( y_i - \sum_{j} \beta_j x_{ij}\right)^2 \right] + \alpha \sum_{j} |\beta_j|
\end{equation}

where $\alpha$ is the constant that determines how much regularisation is applied. For large values of $\alpha$, the penalty term $\sum_{j} |\beta_j|$ has the effect of forcing some of the coefficients to absolutely zero, thus assisting in relevant variable selection. The optimal value of $\alpha$ was calculated using 10-fold cross validation with varying value of $\alpha$ from 0.1 to 8 with step-size of $0.01$. The complete feature selection mechanism is described in algorithm \ref{FeatureSelAlgo}.

\begin{algorithm}
\caption{Feature Selection Mechanism}
\label{FeatureSelAlgo}
\begin{algorithmic}[1]
\State \textbf{Input:} D($F_1,F_2,\dots, F_N,L$)
\State \textbf{Begin}
\State Create empty lists SF1 and SF2
\For{each feature-vector $V_i$ in D}
	\State calculate $cof(i)$ using equation \ref{correlationEq}
	\If{abs($cof(i))>\delta$}
		\State insert $V_i$ in SF1
	\EndIf
\EndFor
\For{each feature-vector $V_i$ in D}
	\State find $\hat{\beta}(i)$ using equation \ref{LassoEq}
	\If{$\hat{\beta}(i) \not = 0$}
		\State insert $V_i$ in SF2
	\EndIf
\EndFor
\State return (SF1 $\cup$ SF2)
\end{algorithmic}
\end{algorithm}

\begin{prop}
\label{pro1}
Time complexity of algorithm \ref{FeatureSelAlgo} is $\mathcal{O}(n)$, where n is the number of observations in each feature of the dataset.
\end{prop}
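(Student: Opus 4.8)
The plan is to bound the cost of Algorithm~\ref{FeatureSelAlgo} line by line, and then observe that the number of features $N$ of the dataset $D$ (together with a handful of fixed solver parameters) is a constant with respect to the growing quantity $n$, so that only the per-feature pass over the $n$ observations survives in the asymptotics. I would therefore open the proof by fixing the convention that $N$ is constant and that the LASSO solver runs for a bounded, data-size-independent number of iterations (fixed by the tolerance and the $10$-fold cross-validation grid on $\alpha$ described above).

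First I would dispatch the inexpensive lines. Line~3 (allocating two empty lists) is $\mathcal{O}(1)$; the membership tests and list insertions in lines~6--7 and~12--13 are $\mathcal{O}(1)$ each (or $\mathcal{O}(N)$ if a duplicate check is made), hence independent of $n$; and the union in line~16 operates on two lists of size at most $N$, costing $\mathcal{O}(N^2)$ in the worst case, again independent of $n$. The two substantive contributions are the loops. In the loop of lines~4--9, each of the $N$ iterations evaluates $cof(i)$ through Equation~\ref{correlationEq}: computing $cov(X_i,L)$, $var(X_i)$, and $var(L)$ each amounts to a single accumulating pass over the $n$ paired observations, so one iteration is $\mathcal{O}(n)$ and the whole loop is $\mathcal{O}(Nn)$. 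In the loop of lines~10--15, each iteration fits the LASSO objective of Equation~\ref{LassoEq}; a coordinate-descent (or LARS-type) solver performs a bounded number of sweeps, each sweep updating the $n$ residual entries a constant number of times, so one iteration is $\mathcal{O}(n)$ and the loop is $\mathcal{O}(Nn)$.

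Summing, the total cost is $\mathcal{O}(1) + \mathcal{O}(Nn) + \mathcal{O}(Nn) + \mathcal{O}(N^2) = \mathcal{O}(Nn + N^2)$, which, with $N$ held fixed, is $\mathcal{O}(n)$, establishing the claim. The one point that needs care --- and which I would make explicit rather than gloss over --- is the LASSO step: a strict accounting of coordinate descent gives a per-iteration cost of $\mathcal{O}(nN)$ times the number of sweeps, so the $\mathcal{O}(n)$ conclusion genuinely depends on treating both $N$ and the solver's iteration budget as constants. Once those conventions are stated up front, the remainder is a routine term-by-term summation with no further obstacle.
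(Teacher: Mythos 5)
Your proposal follows essentially the same route as the paper's proof: a line-by-line cost accounting in which the correlation loop costs $\mathcal{O}(nN)$, the LASSO loop costs $\mathcal{O}(nN^2)$ under a strict accounting (the paper cites the same $n\cdot V^2$ bound), the union costs a quantity independent of $n$, and the final $\mathcal{O}(n)$ bound is obtained by treating the number of features as a constant. Your version is, if anything, slightly more careful than the paper's in making explicit that the conclusion depends on holding both the feature count and the solver's iteration budget fixed.
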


\begin{proof}
Step-3 of the algorithm takes $\mathcal{O}(1)$ time.

Step-4 to step-9 computes the Pearson's correlation coefficient for each pair of feature vector $V_i$ (with $n$ number of observations) and class label $L$. If $V$ is the number of features in a given dataset, then the total computations performed in Step-4 to step-9 are  $(n*V)$.

Step-10 to step-15 computes the LASSO regression coefficient for each feature vector $V_i$. Number of computations performed in Step-10 to step-15 are $(n*V^2)$ \cite{tibshirani1996regression}.

Finally, step-16 calculates union of two sets which takes $\mathcal{O}(V'+V'')$, where $V'$ is the number of features in set SF1 and $V''$ is the number of features in set SF2.

So, the overall time complexity of the algorithm is:

$\mathcal{O}(1 + (n*V) + (n*V^2) + V' + V'') = \mathcal{O}(n*V^2)$

Since, the number of features in a dataset are limited and small, hence $V$ can be neglected. So, the overall time complexity of algorithm \ref{FeatureSelAlgo} is: $\mathcal{O}(n)$.
\end{proof}

\subsection{Unsupervised Ensemble Learning Model with deep learning}
Ensemble learning is based on the principle that the combination of outputs from various learning models can produce more accurate results. The ensemble learning model can be implemented in two ways to produce multiple predicted results: Independent ensemble construction and Coordinated ensemble construction \cite{dietterich2002ensemble}. In independent ensemble construction approach, a learning algorithm can be executed independently multiple times on various training data-subsets or different learning models can be executed independently on same dataset, to produce multiple results which can then be combined using ensemble technique. Whereas, in coordinated ensemble construction the output of one learning algorithm can be used as an input to another learning algorithm, thus making all the base learning algorithms dependent on each other. 

In the proposed model, we have used an independent ensemble construction approach and used weighted voting for combining the output from different base learning models. The main purpose of the proposed ensemble learning model is to predict a class label for each data vector in the given unlabelled dataset. It is because of this we have used clustering techniques for predicting the class labels for data vectors. In the proposed model, we have used Mini Batch K-Means \cite{bottou1995convergence}, Fuzzy C-Means \cite{bezdek1984fcm} and OPTICS (Ordering Points to Identify the Clustering Structure) clustering \cite{ankerst1999optics} as the base learning models. The predicted output of each clustering technique will be either 0 or 1, where 0 represents non-malicious traffic (cluster-1) and 1 represents malicious traffic (cluster-2). The predicted output from each clustering algorithm, for each data entry, is combined using weighted voting using equation \ref{votingEq} to create two clusters -- one containing benign data and another containing malicious data.

An independent ensemble construction approach has been used because it helps to enhance the overall accuracy, reduce variance and produces a more stable prediction for the data by combining the results of different clustering algorithm used through weighted voting technique \cite{kim2003constructing}.

\begin{equation}
\label{votingEq}
V = \sum_{i=1}^{3} (P_i * W_i)
\end{equation}
Where, $W_i$ represent the weights associated with the prediction $P_i$ from each base clustering technique. Then the final predicted class label $\hat{V}$ is calculated as:

\begin{equation}
\hat{V} =
\bigg\{
	\begin{tabular}{cc}
		1  & if $V > 0.5$\\
		0  & otherwise
	\end{tabular}
\end{equation} 

After thorough performance analysis of the clustering technique used in our proposed model, the weights associated with the predicted value from Mini Batch K-Means and OPTICS clustering were set to 0.25 each and for Fuzzy C-Means it was set to 0.5. The entire process of converting an unlabelled dataset into a labelled dataset using the ensemble learning model is shown in algorithm \ref{EnsembleAlgo}.

\begin{algorithm}
\caption{Working of Ensemble Learning Model}
\label{EnsembleAlgo}
\begin{algorithmic}[1]
\State \textbf{Input:} 
\State $D_{UL}$: Unlabelled Dataset
\State $FS$: Feature set from algorithm \ref{FeatureSelAlgo}
\State \textbf{Begin}
\State Create an empty list $D_{L}$
\State Set $W_1=W_2 = 0.25$ \& $W_3 = 0.50$
\For{each data-entry $d_{UL}$ in $D_{UL}$}
	\State $P_1 = MBKmeans(FS(d_{UL}))$ 
	\State $P_2 = OPTICS(FS(d_{UL}))$
	\State $P_3 = FCmeans(FS(d_{UL}))$
	\State Cal. $V$ using eq. \ref{votingEq}
	\If{$V > 0.5$}
		\State Set $\hat{V} = 1$
	\Else
		\State Set $\hat{V} = 0$
	\EndIf
	\State Append $(d_{UL},\hat{V})$ in $D_{L}$
\EndFor
\State return $D_{L}$: Labelled Dataset
\end{algorithmic}
\end{algorithm}

\begin{prop}
\label{pro2}
Time complexity of algorithm \ref{EnsembleAlgo} is $\mathcal{O}(n^2)$, where n is the number of observations in the dataset.
\end{prop}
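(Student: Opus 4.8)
The plan is to mirror the line-by-line accounting used in the proof of Proposition \ref{pro1}: assign a cost to each step of Algorithm \ref{EnsembleAlgo} and then sum them, with the dominant contribution coming from the main loop.

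First I would clear the cheap parts. Creating the empty list $D_{L}$ and fixing the three weights $W_1=W_2=0.25$, $W_3=0.50$ are each $\mathcal{O}(1)$, and the final return of $D_{L}$ is $\mathcal{O}(1)$ as well. Next, the main \textbf{for}-loop iterates exactly once per data entry $d_{UL}$ of $D_{UL}$, i.e. $n$ times. Inside a single iteration I would bound each operation in turn: the feature projection $FS(d_{UL})$ touches only the small, fixed set of selected features, so (as in Proposition \ref{pro1}, where the feature count is treated as negligible) it is $\mathcal{O}(1)$; the combined vote $V$ of equation \ref{votingEq} is a fixed sum of three products, hence $\mathcal{O}(1)$; and the thresholding that sets $\hat{V}$ together with the append of $(d_{UL},\hat{V})$ to $D_{L}$ are $\mathcal{O}(1)$ each. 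The only non-constant work per iteration is the trio of clustering evaluations $P_1 = MBKmeans(\cdot)$, $P_2 = OPTICS(\cdot)$, $P_3 = FCmeans(\cdot)$. With the number of target clusters pinned at two and the feature count constant, the Mini Batch K-Means \cite{bottou1995convergence} and Fuzzy C-Means \cite{bezdek1984fcm} evaluations are linear in the number of observations, and OPTICS \cite{ankerst1999optics} --- the heaviest of the three --- can be charged $\mathcal{O}(n)$ as well. So one pass of the loop costs $\mathcal{O}(n)$.

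Multiplying the $\mathcal{O}(n)$ per-iteration cost by the $n$ iterations gives $\mathcal{O}(n \cdot n) = \mathcal{O}(n^2)$; adding back the $\mathcal{O}(1)$ setup and return leaves the bound untouched, so Algorithm \ref{EnsembleAlgo} runs in $\mathcal{O}(n^2)$, as claimed.

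The step I expect to be the real obstacle is exactly the last one above: justifying that each OPTICS call is only $\mathcal{O}(n)$. A textbook OPTICS run is $\mathcal{O}(n^2)$ on its own, which would inflate the total to $\mathcal{O}(n^3)$, so the argument must lean on something stronger --- either an indexed neighbourhood query (pushing OPTICS toward $\mathcal{O}(n\log n)$ and then absorbing the logarithmic factor), or the reading that the clustering models are pre-fitted so that each ``call'' merely assigns one point to the nearer of two cluster representatives, which is genuinely $\mathcal{O}(1)$ in $n$ and makes the whole loop $\mathcal{O}(n)$. Pinning down which of these readings is intended, and then checking that Mini Batch K-Means and Fuzzy C-Means stay linear (rather than picking up a hidden iteration-count factor) under the same reading, is the delicate part; the rest is routine bookkeeping identical in spirit to the proof of Proposition \ref{pro1}.
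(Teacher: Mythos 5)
Your headline computation---$\mathcal{O}(n)$ work per loop iteration times $n$ iterations---is not the accounting the paper uses, and as you yourself suspect, it does not actually hold up. The paper charges each clustering algorithm its \emph{aggregate} cost over the whole dataset exactly once: $\mathcal{O}(n \cdot i)$ for Mini Batch K-Means, $\mathcal{O}(n^2)$ for OPTICS, and $\mathcal{O}(n \cdot i)$ for Fuzzy C-Means (with the cluster count and $V_{FS}$ treated as constants), then adds $\mathcal{O}(n)$ for the voting/thresholding/append bookkeeping across all iterations and sums to $\mathcal{O}(n \cdot i + n^2 + n \cdot i + 2n) = \mathcal{O}(n^2)$. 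In other words, steps 7--18 are read as ``fit each clusterer once over the dataset and read off the per-entry assignments,'' not as $n$ independent invocations of each clusterer. This is precisely the second of the two readings you list at the end, and it is the only one under which the stated bound follows cleanly.

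The gap in your main argument is the claim that a single OPTICS evaluation costs $\mathcal{O}(n)$: a full OPTICS run is $\mathcal{O}(n^2)$ (or $\mathcal{O}(n\log n)$ with spatial indexing), so under your per-call-times-$n$ scheme you would get $\mathcal{O}(n^3)$ or $\mathcal{O}(n^2\log n)$---and note that the logarithmic factor cannot simply be ``absorbed,'' since $\mathcal{O}(n^2\log n) \neq \mathcal{O}(n^2)$. To your credit, you flag this as the obstacle and correctly identify the pre-fitted reading as the way out; had you committed to that reading from the start (one $\mathcal{O}(n^2)$ OPTICS fit plus $\mathcal{O}(1)$ per-entry lookups), your proof would coincide with the paper's. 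One further point you raise that the paper glosses over: the $\mathcal{O}(n \cdot i)$ terms for K-Means and Fuzzy C-Means carry the iteration count $i$, and the paper absorbs them into $\mathcal{O}(n^2)$ without arguing that $i$ is bounded by $n$ (or by a constant); your worry about a ``hidden iteration-count factor'' is legitimate and is simply left implicit in the paper's proof.
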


\begin{proof}
For each step-5 and step-6, it takes $\mathcal{O}(1)$ time for execution.

Step-7 to step-18 computes $P_1, P_2, P_3$ for each data entry of the dataset with $V_{FS}$ number of features (selected by algorithm \ref{FeatureSelAlgo}), $n$ number of observations in the dataset and $i$ number of iterations.

So, to compute $P_1$, it takes $\mathcal{O}(n*i*k*V_{FS})$ \cite{bottou1995convergence}, where $k$ is the number of clusters. Since, $k=2$ and $V_{FS}$ is small, so we can neglect them while calculating complexity. Hence, the time complexity of computing $P_1$ for entire dataset will be $\mathcal{O}(n*i)$.

Also, to compute $P_2$, it takes $\mathcal{O}(n^2)$ \cite{ankerst1999optics}, and

To compute $P_3$, it takes $\mathcal{O}(n*i*V_{FS}*c^2)$ \cite{bezdek1984fcm}, where $c$ is the number of clusters. Here again, $c=2$ and $V_{FS}$ is small, so we can neglect them while calculating complexity. Hence, the time complexity of computing $P_3$ for entire dataset will be $\mathcal{O}(n*i)$.

Step-12 to 16 takes $\mathcal{O}(n)$, and step-17 also takes $\mathcal{O}(n)$ time for execution.

Step-19 has $\mathcal{O}(1)$ time complexity.

Hence, the overall time complexity of algorithm \ref{EnsembleAlgo} is:

$\mathcal{O}(1 + (n*i) + n^2 + (n*i) + (2*n) + 1) = \mathcal{O}(n^2)$

\end{proof}

The proposition \ref{pro1} and \ref{pro2} suggest that both the proposed feature selection mechanism and ensemble learning model have polynomial time complexities, thus making them computationally feasible. 

The proposed ensemble model generates a labelled dataset. The generated labelled dataset is then used to train different deep learning models. In our study, we have used LSTM network \cite{kim2016long}, MLP \cite{friedman2017elements} and DBN \cite{hinton2006fast}, and using performance analysis an efficient model is selected for detecting the malicious attacks in the IoT network. The detailed architectures used for building each deep neural network model is shown in figure \ref{NNarch}. The efficient trained deep learning model can be then deployed at fog layer (in a fog computing architecture for IoT \cite{hu2017survey}), and periodically updated on cloud layer, to detect new/unknown network attacks in IoT devices at edge layer.

\begin{figure*}[htbp]
\centering
\includegraphics[clip,scale=0.55]{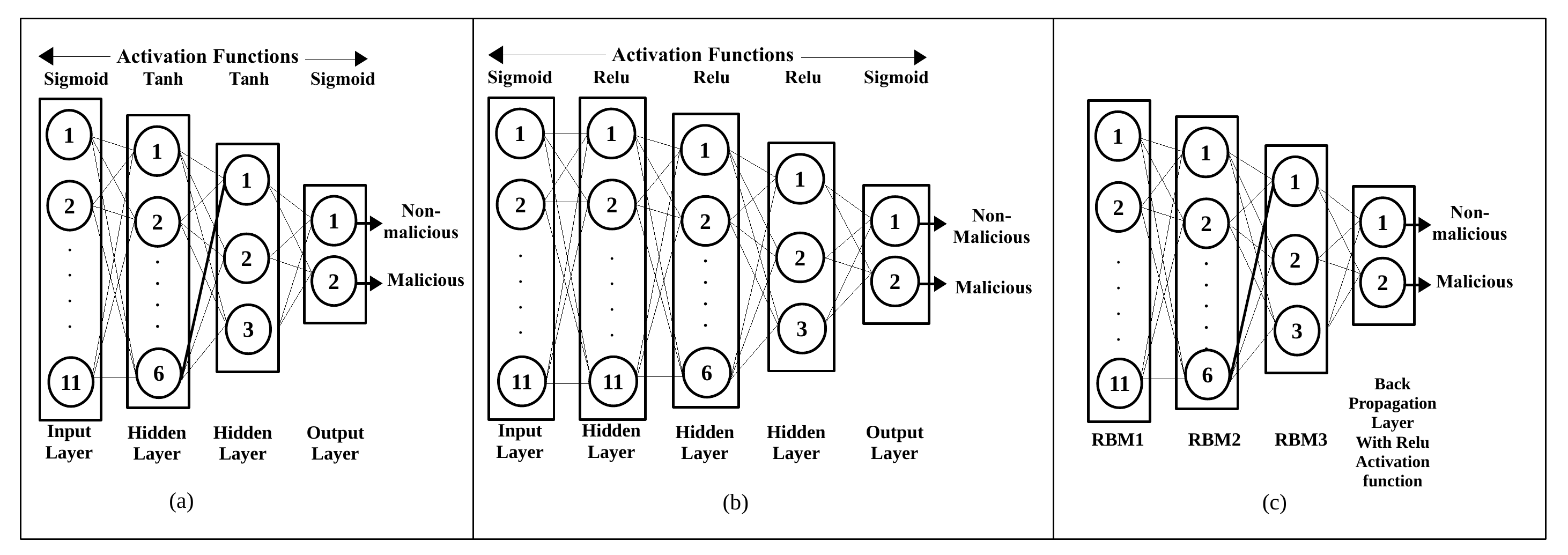}
\caption{Neural network architecture (a) LSTM NN (b) Multilayer Perceptron NN (c) DBN}
\label{NNarch}
\end{figure*}

\section{Results and Discussion}
\label{ResultsSection}
To evaluate the performance of proposed model for detecting attacks in IoT network, we used NSL-KDD \cite{dataset} \& TON-IoT \cite{TONdataset} Network dataset and implemented the model using Python programming language (ver. 3.7.3) with Scikit-learn and numpy packages. For implementing the deep learning model, we used Keras open-source software library. The literature suggests that NSL-KDD dataset has been used by many researcher to access the performance of attack detection models for IoT network \cite{gunupudi2017clapp,su2020bat,de2020hybrid}. Also, TON-IoT network dataset represents true network traffic of an IoT network and contains more IoT network specific network attacks, as the dataset was generated by real world heterogeneous IoT devices.

\subsection{Dataset Pre-processing}
NSL-KDD is an enhanced version of KDD CUP 99 dataset (initially generated in 1998). The enhancement took place in 2009 by Tavallaee et al. \cite{tavallaee2009detailed} by performing partitioning and removing the redundant records from the original dataset. The dataset is publicly available on the internet. In our study, we have used the NSL-KDD training data file, which consists of 125,973 network traffic samples with 41 features and classifies each traffic detail by the class label into malicious (represented by 1) or non-malicious (represented by 0). The malicious entries represent the network traffic generated by attackers. In the dataset the malicious entries represent DoS, Probe, R2L and U2R attacks. Before implementing the proposed model on the dataset, we have converted the data-type of some of the features, which include: Protocol type, Service and Flag. Mainly there are 3 types of protocols used in the dataset and we have assigned a numeric value to each: 1 for Internet Control Message Protocol, 2 for Transmission Control Protocol and 3 for User Datagram Protocol. Similarly, the 70 services in the service feature are also represented by numeric values from 1 to 70. The Flag attribute has 11 different values and each is represented by a numeric value as: OTH (1), REJ (2), RSTO (3), RSTOS0 (4), RSTR (5), S0 (6), S1 (7), S2 (8), S3 (9), SF (10) and SH (11).

TON-IoT dataset represents a combination of Telemetry, Operating and Network datasets, created at the IoT lab of UNSW Canberra in Australian Defence Force Academy (ADFA). In our study, we have used TON-IoT network dataset, which consists of 461,043 observations regarding network traffic generated by different IoT devices (malicious as well as non-malicious). It contains IoT network specific attacks like, DDoS, Ransomware attack, Injection attack, Man-in-the-middle attack etc. As described for NSL-KDD dataset, similar data pre-processing was carried out in TON-IoT network dataset for the features having string type data. Also, for feature scaling we have used standard scalar for both the datasets.

Once the dataset is ready to be used for implementing the proposed model, we first implement the proposed feature selection mechanism on the dataset. The proposed feature selection mechanism selects 11 \& 9 relevant features out of 41 \& 43 available features of NSL-KDD \& TON-IoT network datasets respectively. This results in the feature reduction rate (FRR) of 73.17\% \& 79.07\% for respective datasets, which can be calculated using equation \ref{FRR}.

\begin{equation}
\label{FRR}
FRR = 1 - \frac{n(SF)}{n(F)}
\end{equation}

Where, $n(SF)$ represents the number of selected features and $n(F)$ represent the total number of features in the dataset. Also, the complete dataset features and selected features are shown in figure \ref{datasetFeatures} \& \ref{datasetFeaturesTON}. Network attacks in both the datasets are shown in figure \ref{attacksFigure}.

\begin{figure}[h]
\centering
  \begin{subfigure}[b]{0.49\linewidth}
    \includegraphics[clip, scale=0.4]{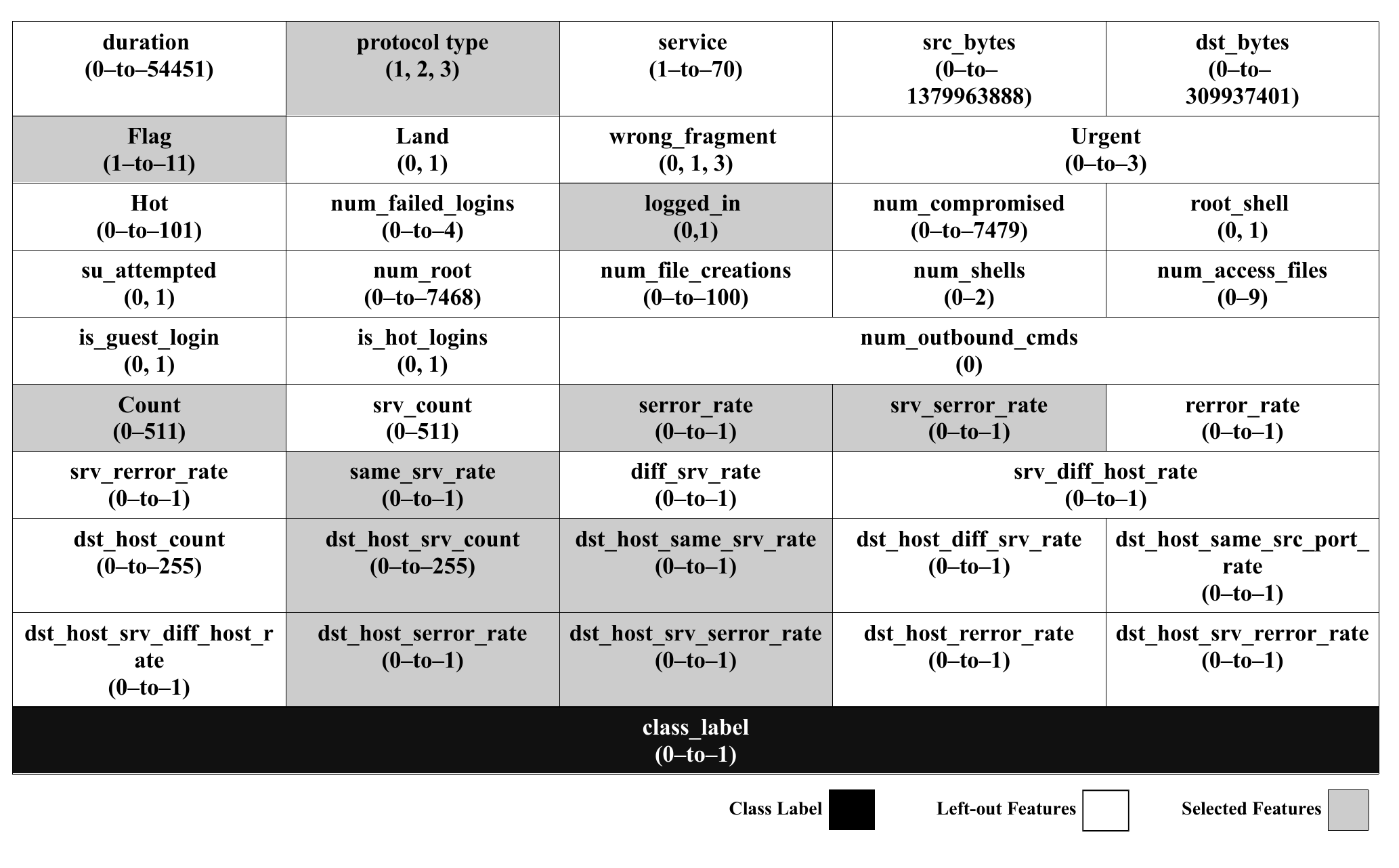}
    \caption{NSL-KDD dataset}\label{datasetFeatures}
  \end{subfigure}
  \begin{subfigure}[b]{0.49\linewidth}
    \includegraphics[clip, scale=0.4]{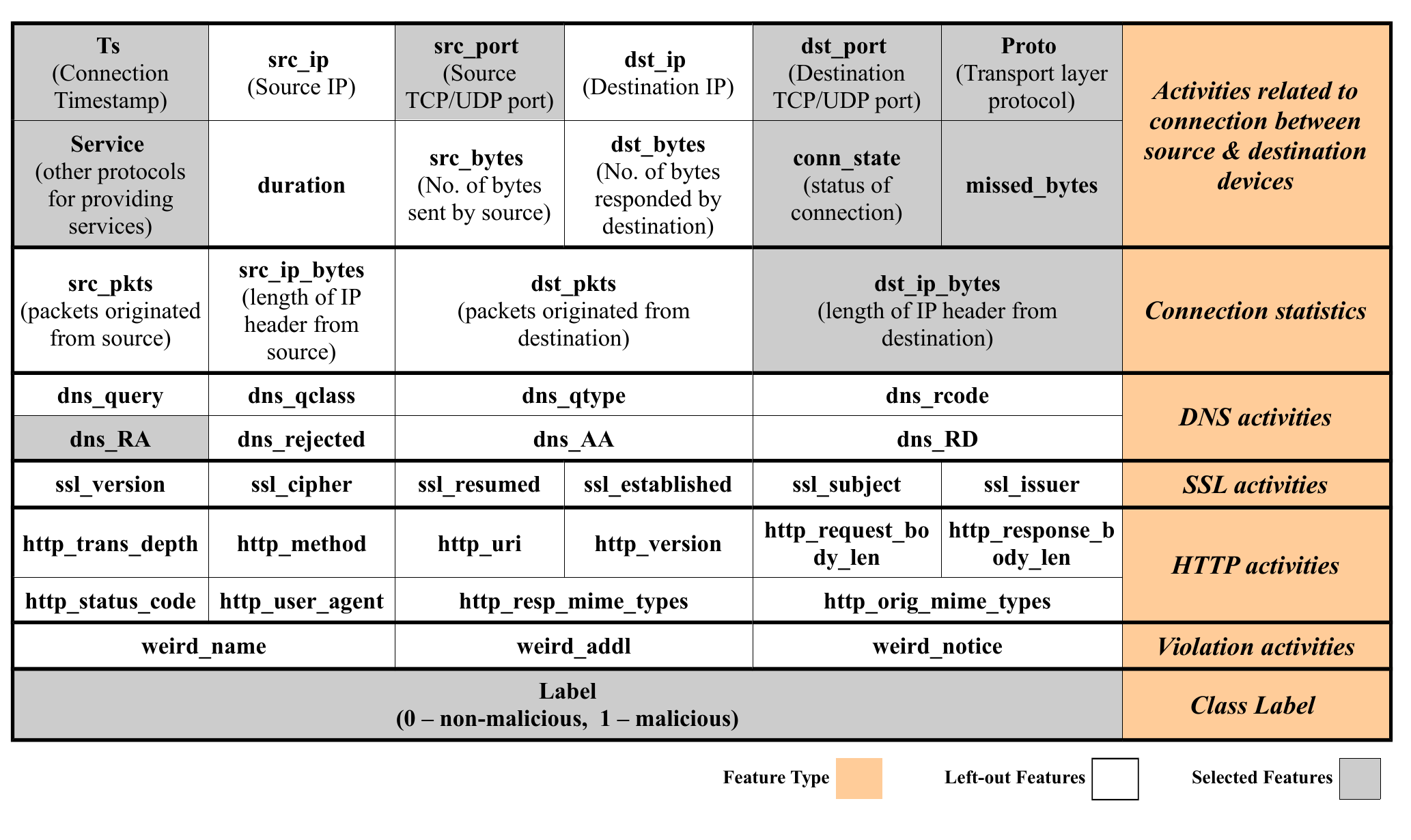}
    \caption{TON-IoT network dataset}\label{datasetFeaturesTON}
  \end{subfigure}
  \caption{Dataset Features}
\end{figure}

\begin{figure}[h]
\centering
\includegraphics[clip,scale=0.35]{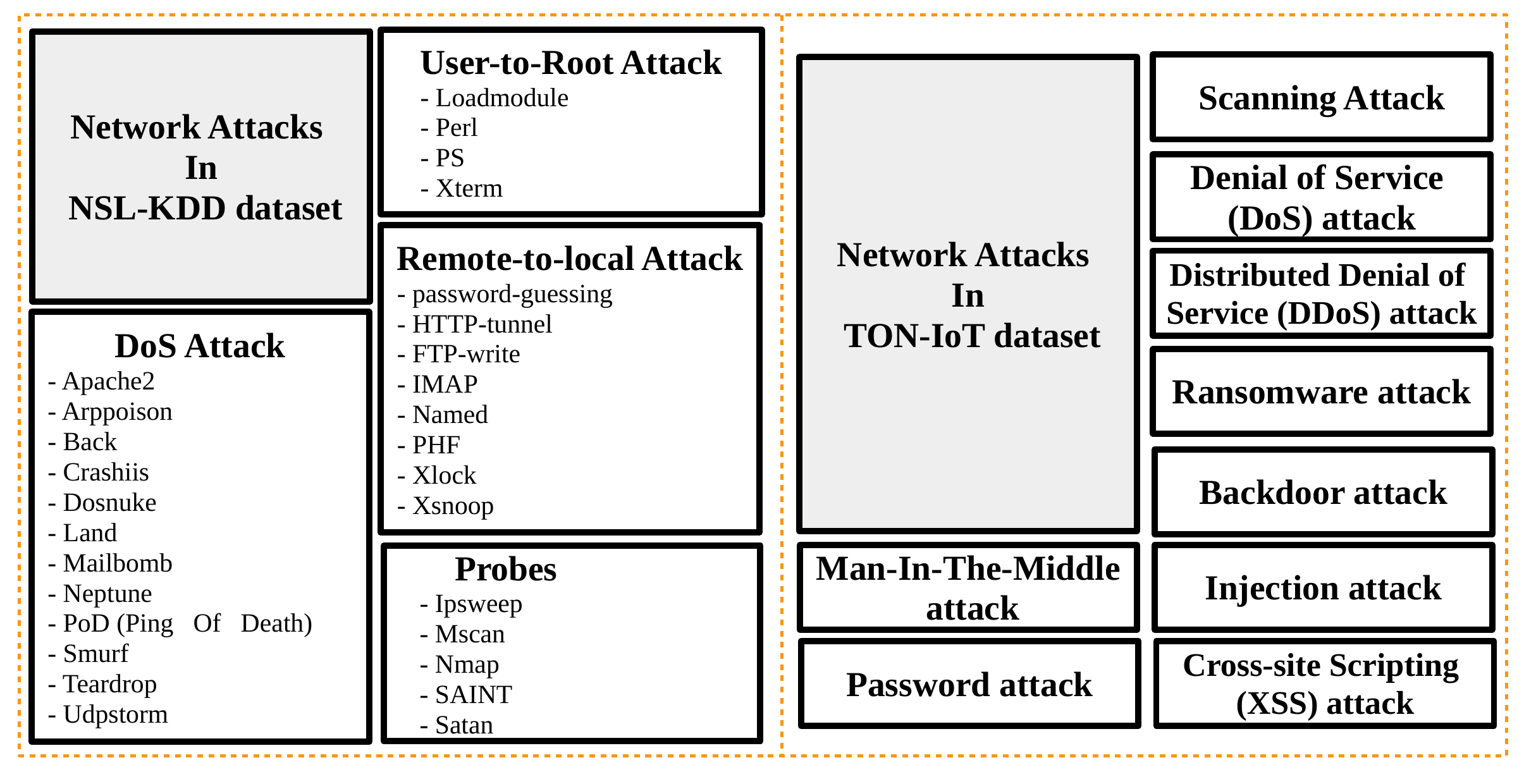}
\caption{Network attacks in NSL-KDD and TON-IoT network Datasets}
\label{attacksFigure}
\end{figure}

\subsection{Evaluation metrics}

\begin{table}[p]
\centering
\caption{Confusion Matrix} 
\begin{center}
\begin{tabular}{c|cc} \hline

\multirow{2}{*}{\textbf{Actual Value}} & \multicolumn{2}{c}{\textbf{Predicted Value}}\\
& \textbf{Malicious} & \textbf{Non-malicious}\\\hline
\textbf{Malicious} & True Positive (TP) & False Negative (FN)\\
\textbf{Non-malicious} & False Positive (FP) & True Negative (TN)\\\hline

\end{tabular}
\end{center} 
\label{ConfusionMatrix} 
\end{table}

In the proposed model we have used clustering as well as classification techniques and each produces binary output (0 or 1) representing non-malicious or malicious traffic. So, to measure the performance of various techniques used in the model, we have used confusion matrix (as shown in table \ref{ConfusionMatrix}) and following evaluation metrics:

\begin{itemize}
\item Homogeneity $(h)$ refers to how identical the samples in a cluster are.
\begin{equation}
h = 1 - \frac{H(Y_{pred.}|Y_{true})}{H(Y_{true})}
\end{equation}
\item The clustering algorithm's completeness $(c)$ determines how many related samples are grouped together.
\begin{equation}
c = 1 - \frac{H(Y_{pred.}|Y_{true})}{H(Y_{pred.})}
\end{equation}
Where, $H()$ represents the entropy.
\item V-measure $(V)$ is the harmonic means between $h$ and $c$.
\begin{equation}
V = \frac{2*h*c}{h+c}
\end{equation}
\item A Rand Index $RI$ evaluates all pairs of samples from true and predicted clusterings and finds the similarity between the two. 
\begin{equation}
RI = \frac{TP+TN}{TP+FN+FP+TN}
\end{equation}
The Adjusted Rand Index $(ARI)$ is then calculated as:
\begin{equation}
ARI = \frac{(RI - ExpectedRI)}{(max(RI) - ExpectedRI)} 
\end{equation}
\item If $X$ and $Y$ are two clusterings, then Adjusted Mutual Information $(AMI)$ between two clusterings is calculated as:
\begin{equation}
\hspace{-3em}
AMI(X, Y) = \frac{MI(X, Y) - E(MI(X, Y))}{avg(H(X), H(Y)) - E(MI(X, Y))}
\end{equation}
\item Out of all the samples analysed, the accuracy $Acc$ indicates the percentage of samples that are correctly identified.
\begin{equation}
Acc = \frac{TP+TN}{TP+FN+FP+TN}
\end{equation}
\item The capacity of the classifier to detect only relevant data is referred to as precision.
\begin{equation}
Precision= \frac{TP}{TP+FP}
\end{equation}
\item The ratio of total positives recognised by the system to real positives across the entire system is given by Recall.
\begin{equation}
Recall = \frac{TP}{TP+FN}
\end{equation}
\item F1-score represents the harmonic mean between precision and recall.
\begin{equation}
F1-score = \frac{2*TP}{2*TP+FP+FN}
\end{equation}
\item The False Alarm Rate $(FAR)$ is the percentage of non-malicious samples in the given dataset that were incorrectly categorised as an attack.
\begin{equation}
FAR = \frac{FP}{TN + FP}
\end{equation}
\item The rate of accurately predicted non-malicious samples for all possible non-malicious samples in the given dataset is represented by specificity.
\begin{equation}
Specificity = \frac{TN}{TN + FP}
\end{equation}
\item Matthews Correlation Coefficient $(MCC)$ finds the correlation between actual class label and the predicted class value.
\scriptsize
\begin{equation}
\hspace{-3em}
MCC = \frac{(TP*TN)-(FP*FN)}{\sqrt{(TP+FN)*(TP+FP)*(TN+FP)*(TN+FN)}}
\end{equation}

\end{itemize}

\subsection{Performance Analysis}
In the initial phase, we have implemented the clustering techniques (as discussed in the proposed model) on NSL-KDD and TON-IoT network datasets for selected features, to evaluate the clustering performance of each model. The entries of confusion matrices and other performance parameters for clustering the network traffic, in both the dataset, into malicious and non-malicious are shown in table \ref{clusteringTable}. Figure \ref{clusteringFigure} \& \ref{clusteringFigureTON}  shows the comparative analysis of the performance of each clustering technique used for both the datasets. It is evident from the results that Fuzzy C Means clustering outperforms the other, that is why we have set more weight on its predicted class label.

\begin{table*}[p]
\centering
\caption{Results of clustering techniques for predicting class label (malicious or non-malicious)} 
\footnotesize
\begin{center}
\begin{tabular}
{p{0.1\linewidth}|p{0.08\linewidth}p{0.08\linewidth}p{0.08\linewidth}|p{0.08\linewidth}p{0.08\linewidth}p{0.08\linewidth}}\hline

 & \multicolumn{3}{c|}{\textbf{For NSL-KDD Dataset}} & \multicolumn{3}{c}{\textbf{For TON-IoT Dataset}}\\

Performance Parameters &
Mini Batch K-Means &
OPTICS &
Fuzzy C-Means &
Mini Batch K-Means &
OPTICS &
Fuzzy C-Means\\\hline

TP &
40951 &
56711 &
43906 &
112483 &
155772 &
120600\\

FN &
17679 &
1919 &
14724 &
48560 &
5271 &
40443\\

FP &
1803 &
20187 &
176 &
8032 &
89929 &
784\\

TN &
65540 &
47156 &
67167 &
291968 &
210071 &
299216\\

Precision &
0.957 &
0.737 &
0.996 &
0.933 &
0.633 &
0.993\\

Recall &
0.698 &
0.967 &
0.748 &
0.698 &
0.967 &
0.748\\

Homogeneity &
0.426 &
0.398 &
0.543 &
0.482 &
0.398 &
0.543\\

Completeness &
0.427 &
0.411 &
0.58 &
0.463 &
0.411 &
0.58\\\hline

\end{tabular}
\end{center} 
\label{clusteringTable} 
\end{table*}

\begin{figure}[h]
\centering
  \begin{subfigure}[b]{0.49\linewidth}
    \includegraphics[clip, scale=0.3]{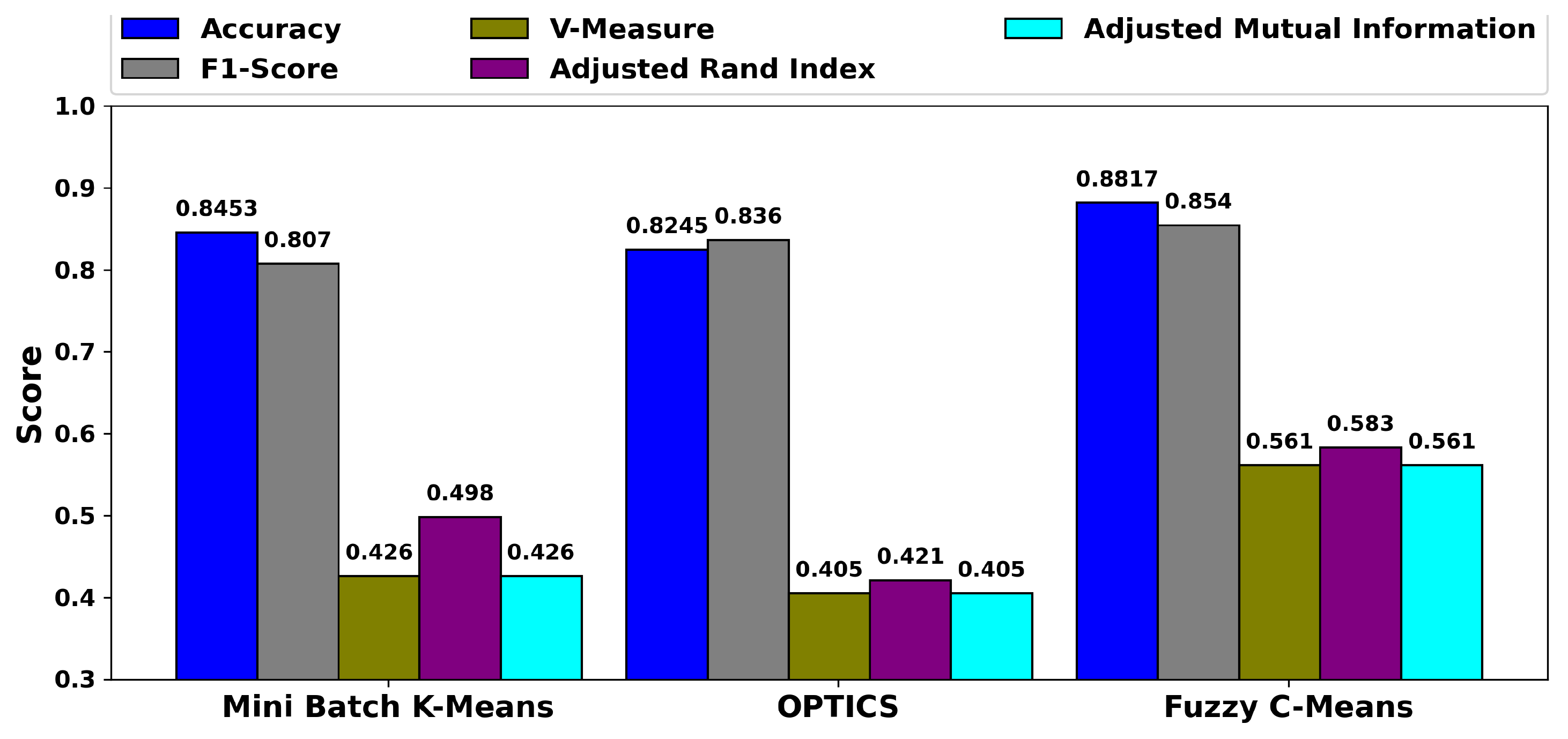}
    \caption{For  NSL-KDD dataset}\label{clusteringFigure}
  \end{subfigure}
  \begin{subfigure}[b]{0.49\linewidth}
    \includegraphics[clip, scale=0.3]{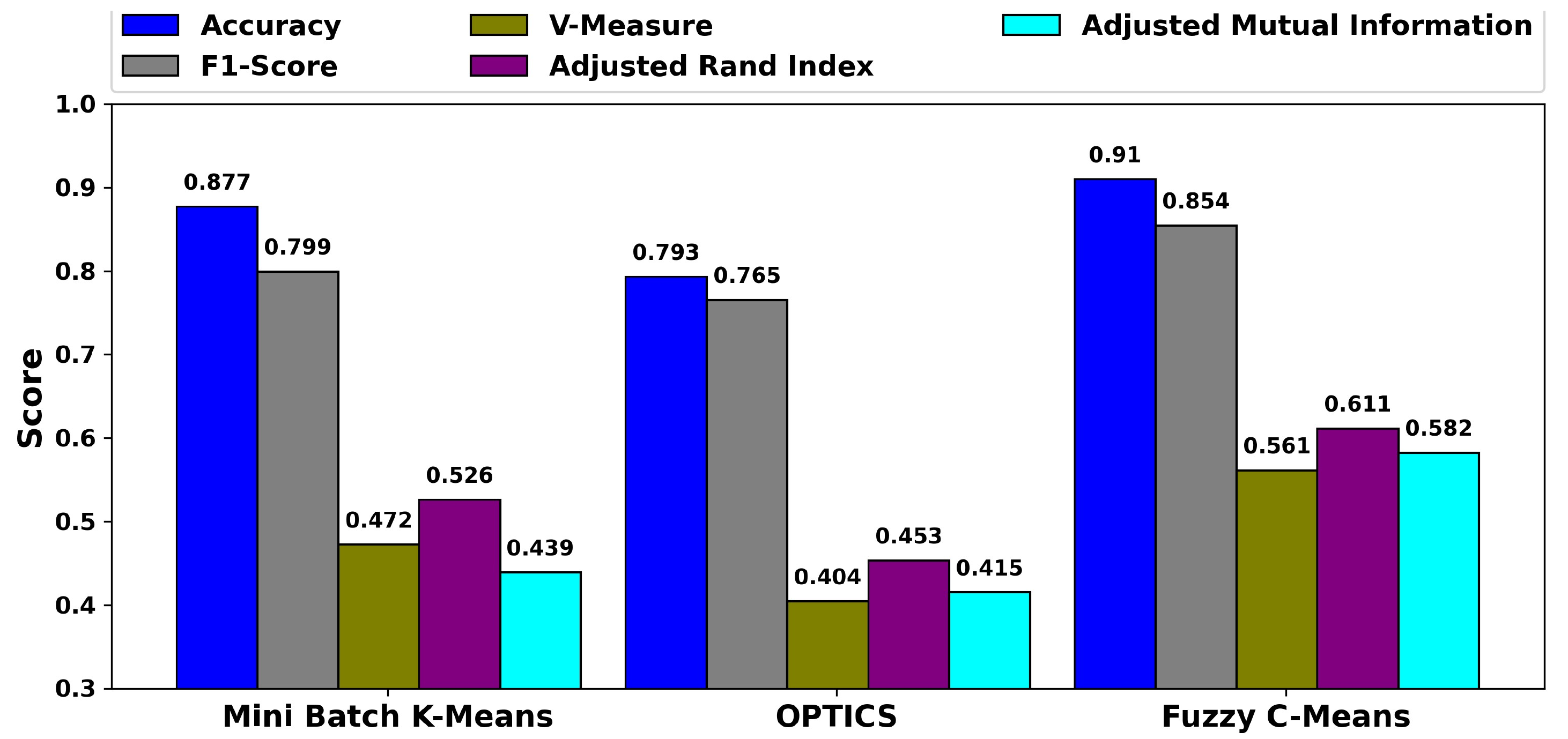}
    \caption{For  TON-IoT network dataset}\label{clusteringFigureTON}
  \end{subfigure}
  \caption{Comparative results of various clustering techniques used for predicting class label (malicious and non-malicious)}
\end{figure}

After finalising the weights for performing weighted voting in the ensemble learning model, the proposed ensemble model is now executed where the entire unlabelled NSL-KDD and TON-IoT network datasets separately are given to the clustering algorithms to predict the class label for each traffic observation. The predicted class labels are used as an input to the weighted voting process to predict the final class label for each observation. After, the entire unlabelled datasets are converted into labelled datasets, then these generated dataset are used to train various deep neural network models. The architectures used for building each deep neural network model is shown in figure \ref{NNarch}.

The optimal values of hyperparameters of various deep learning models were selected after iterative simulation of the models with varying values. For implementing LSTM deep neural networks, the value of various hyperparameters is given as: Learning rate: 0.2, decay: 0.01, Number of epochs: 40, Batch size: 250, Optimizer: Adam, and Dropout rate at each LSTM layer: 0.2. Similarly, for Multilayer perceptron, the value of various hyperparameters is given as: Learning rate: 0.001, Number of epochs: 40, Batch size: 200, Optimizer: stochastic gradient descent (SGD). All the layers of DBN are created using Restricted Boltzmann Machine (RBM), except the output layer (which uses back-propagation layer to generate the final output). The value of various hyperparameters in DBN are: Learning rate: 0.08, Decay: 0.006, Optimizer: Adamax, Number of epochs: 30, Batch size: 300. Figure \ref{lossCurve} shows the loss curve for various models during the training phase.

\begin{figure}[h]
\centering
\includegraphics[clip,scale=0.35]{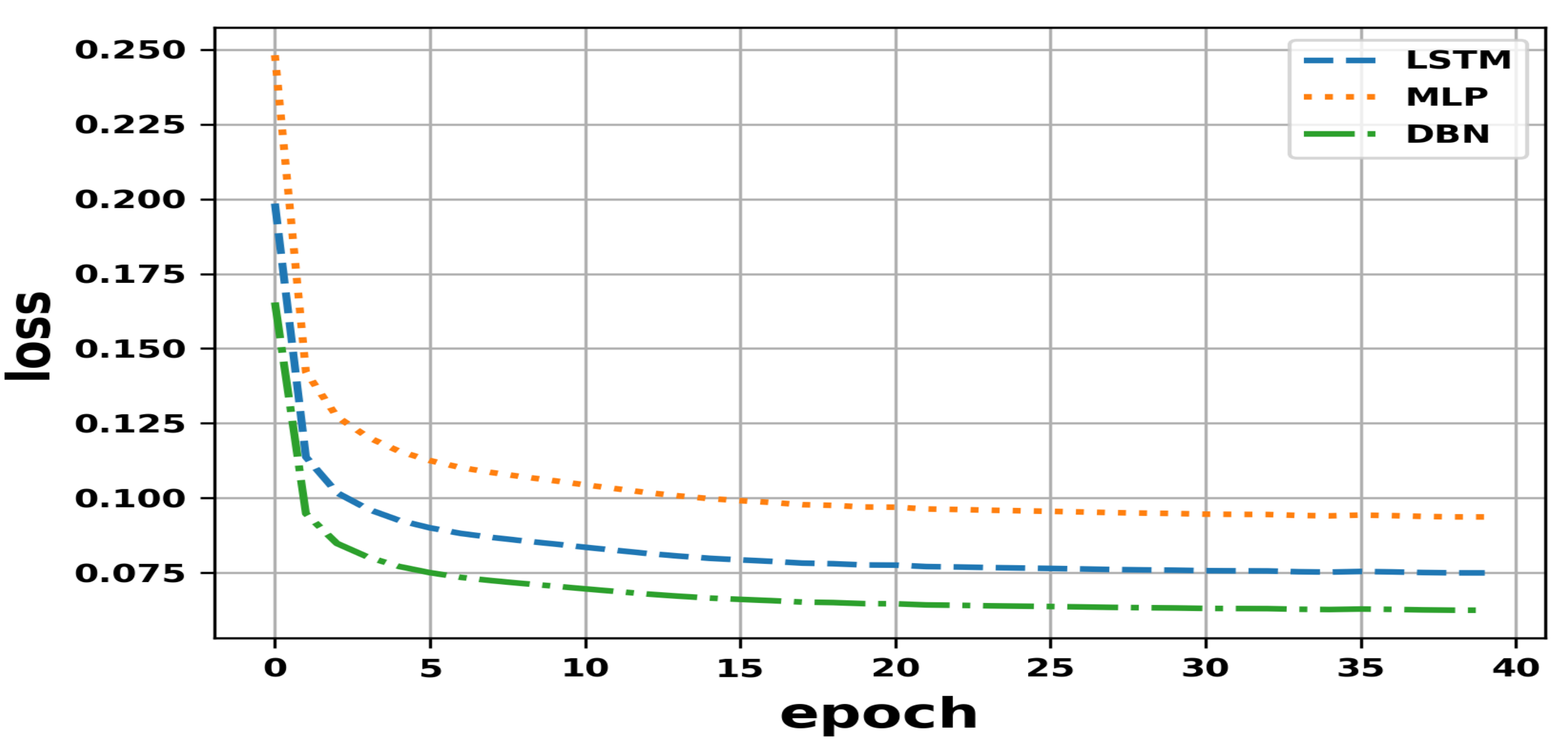}
\caption{Loss curve during training process for various deep learning models}
\label{lossCurve}
\end{figure}

Once, the deep learning models are trained using the labelled datasets generated by the proposed model, they are then tested on actual NSL-KDD and TON-IoT network datasets separately to detect malicious and non-malicious traffic. The performance of deep learning models are shown in table \ref{clasificationTable} and figure \ref{clacificationFigure} \& \ref{clacificationFigureTON}.

\begin{table}[p]
\centering
\caption{Results of various deep learning models for detecting malicious network traffic, after being trained on labelled dataset generated by proposed model} 
\footnotesize
\begin{center}
\begin{tabular}
{p{0.16\linewidth}|p{0.08\linewidth}p{0.08\linewidth}p{0.08\linewidth}|p{0.09\linewidth}p{0.09\linewidth}p{0.09\linewidth}}\hline

 & \multicolumn{3}{c|}{\textbf{For NSL-KDD Dataset}} & \multicolumn{3}{c}{\textbf{For TON-IoT Dataset}}\\
Parameters &
LSTM &
MLP &
DBN &
LSTM &
MLP &
DBN\\\hline

TP &
54505 &
55469 &
57253 &
149713 &
152360 &
157261\\

FN &
4125 &
3161 &	
1377 &
11330 &
8683 &
3782\\

FP &
2442 &
1293 &
1596 &
10879 &
5760 &
7110\\

TN &
64901 &
66050 &
65747 &
289121 &
294240 &
292890\\

Precision &
0.957 &
0.977 &
0.972 &
0.932 &
0.963 &
0.956\\

Recall &
0.929 &
0.946 &
0.976 &
0.929 &
0.946 &
0.976\\

FAR &
0.036 &
0.019 &
0.023 &
0.036 &
0.019 &
0.0237\\

Training Time (in Sec)&
11.84&
18.31&
15.63&
19.44&
28.64&
24.25\\

Testing Time (in Sec)&
0.023&
0.036&
0.029&
0.056&
0.074&
0.068\\\hline

\end{tabular}
\end{center} 
\label{clasificationTable} 
\end{table}

\begin{figure}[h]
\centering
  \begin{subfigure}[b]{0.49\linewidth}
    \includegraphics[clip, scale=0.3]{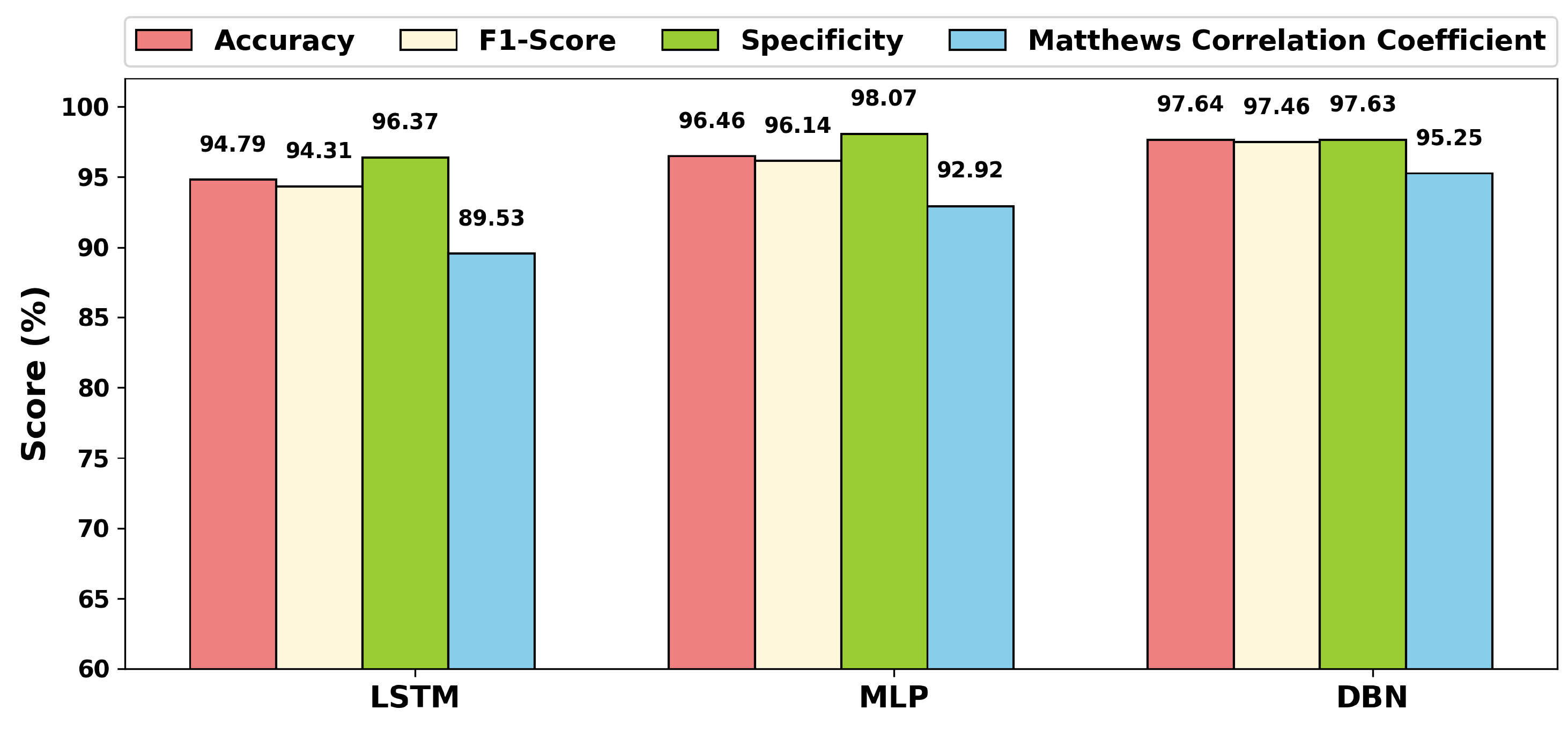}
    \caption{For  NSL-KDD dataset}\label{clacificationFigure}
  \end{subfigure}
  \begin{subfigure}[b]{0.49\linewidth}
    \includegraphics[clip, scale=0.3]{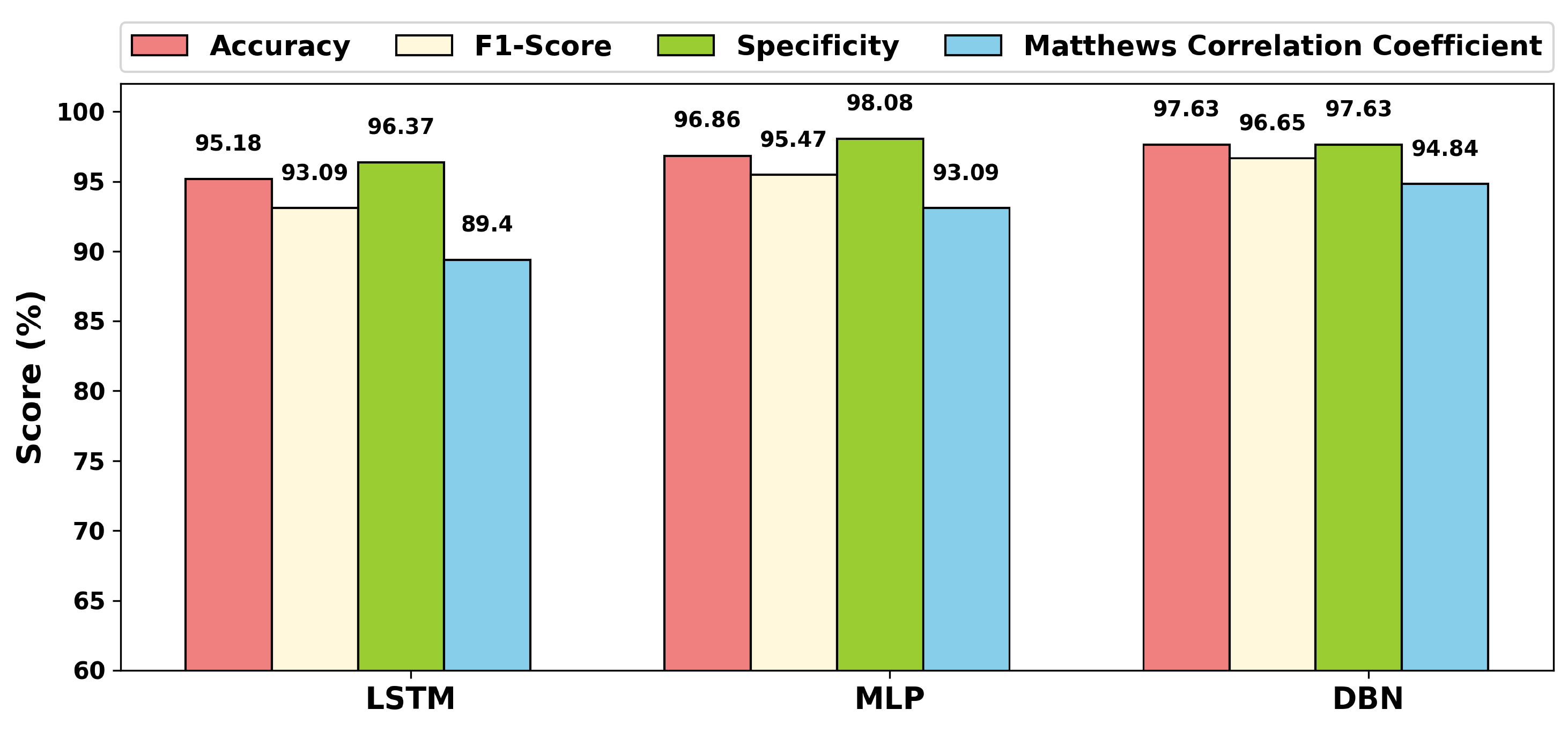}
    \caption{For  TON-IoT network dataset}\label{clacificationFigureTON}
  \end{subfigure}
  \caption{Comparative results of models for detecting malicious network traffic, after being trained on labelled dataset generated by proposed model}
\end{figure}

\begin{figure}[h]
\centering
  \begin{subfigure}[b]{0.45\linewidth}
    \includegraphics[scale=0.3]{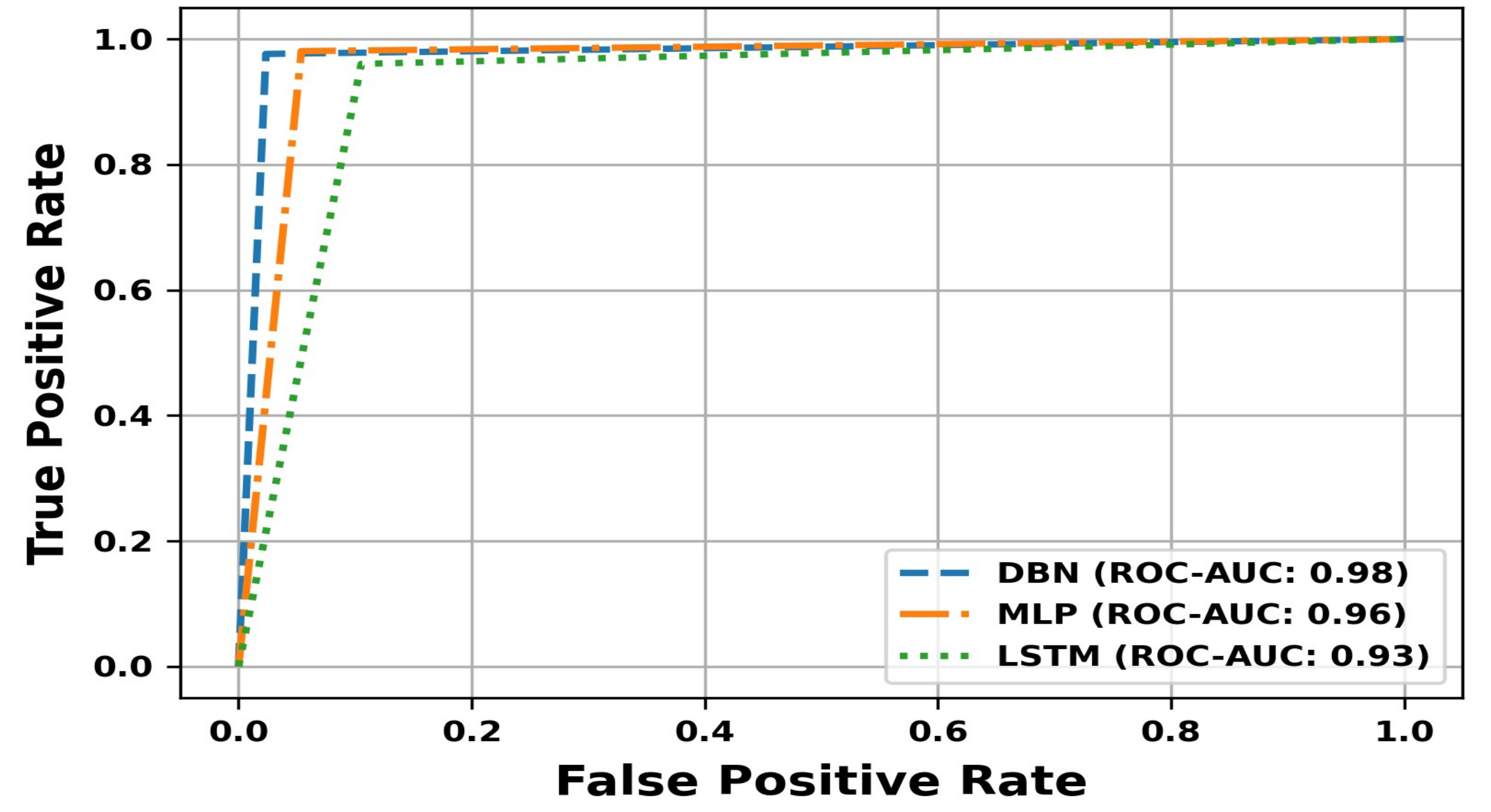}
    \caption{For  NSL-KDD dataset}\label{ROC}
  \end{subfigure}
  \begin{subfigure}[b]{0.45\linewidth}
    \includegraphics[scale=0.3]{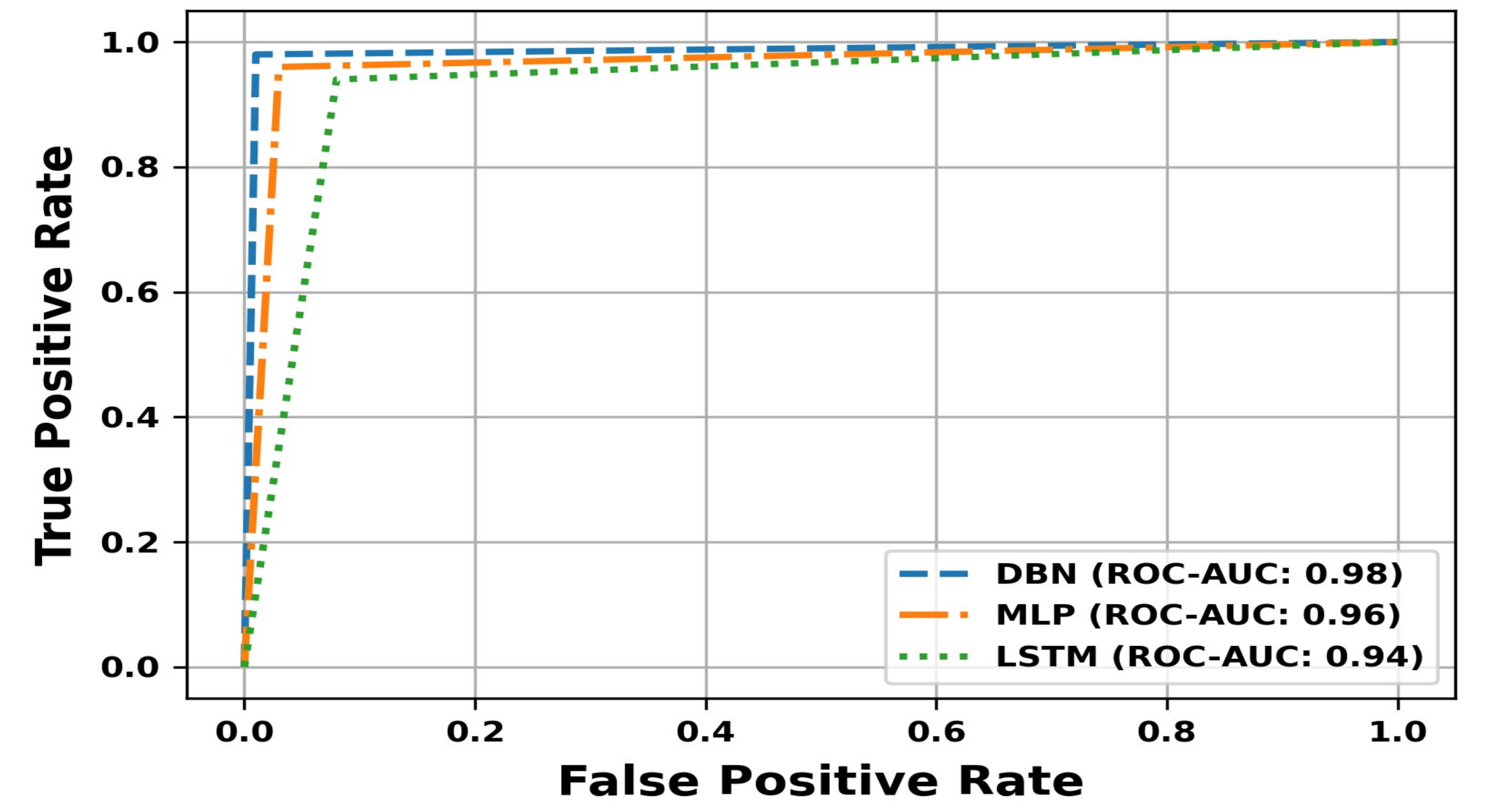}
    \caption{For  TON-IoT network dataset}\label{ROC_TON}
  \end{subfigure}
  \caption{ROC curves of models for detecting malicious  network traffic, after being trained on labelled dataset generated by proposed model}
\end{figure}

\begin{figure}
\centering
  \begin{subfigure}{.45\textwidth}
    \includegraphics[scale=0.3]{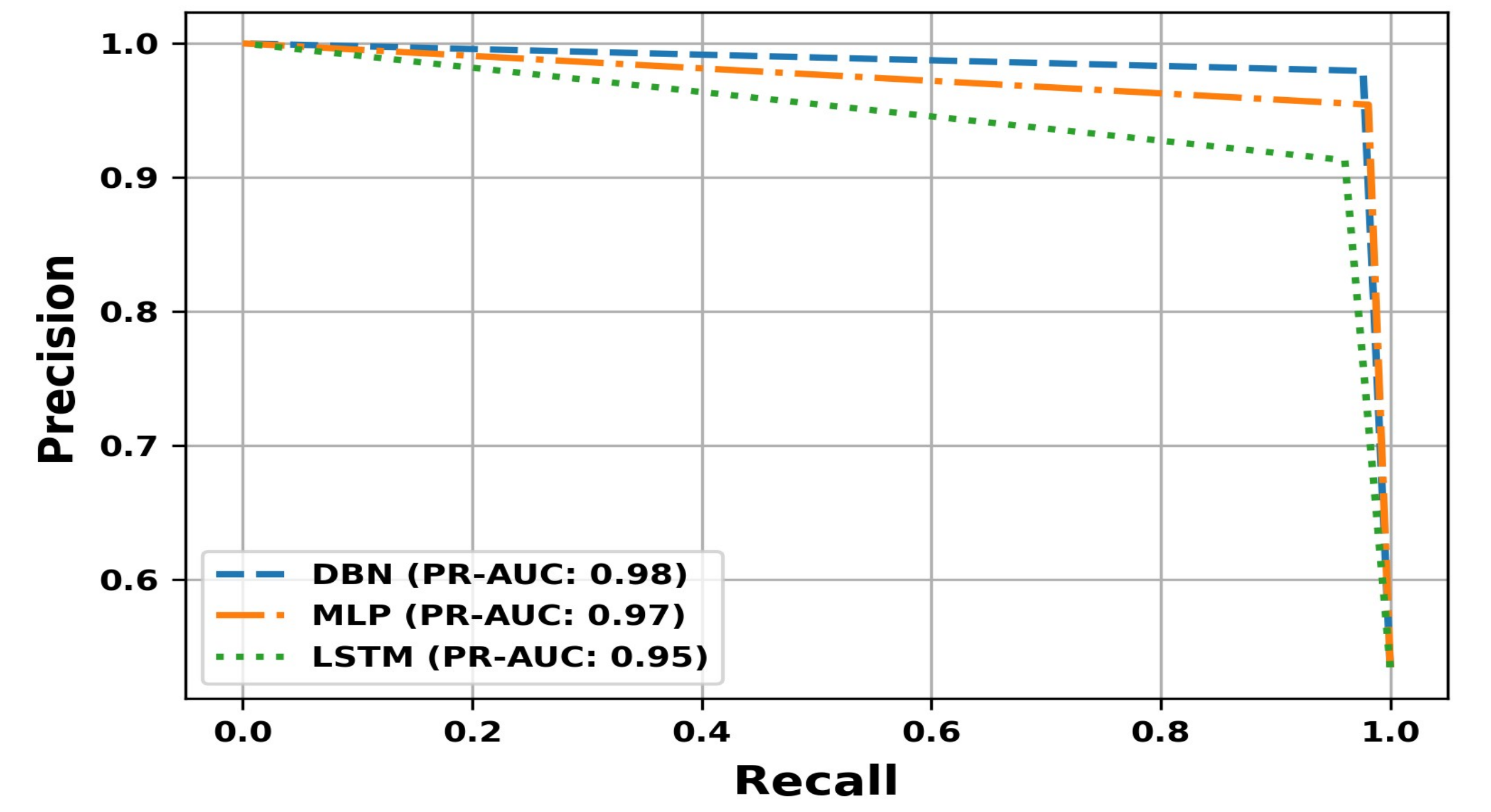}
    \caption{For NSL-KDD dataset}\label{PR}
  \end{subfigure}
  \begin{subfigure}{.45\textwidth}
    \includegraphics[scale=0.3]{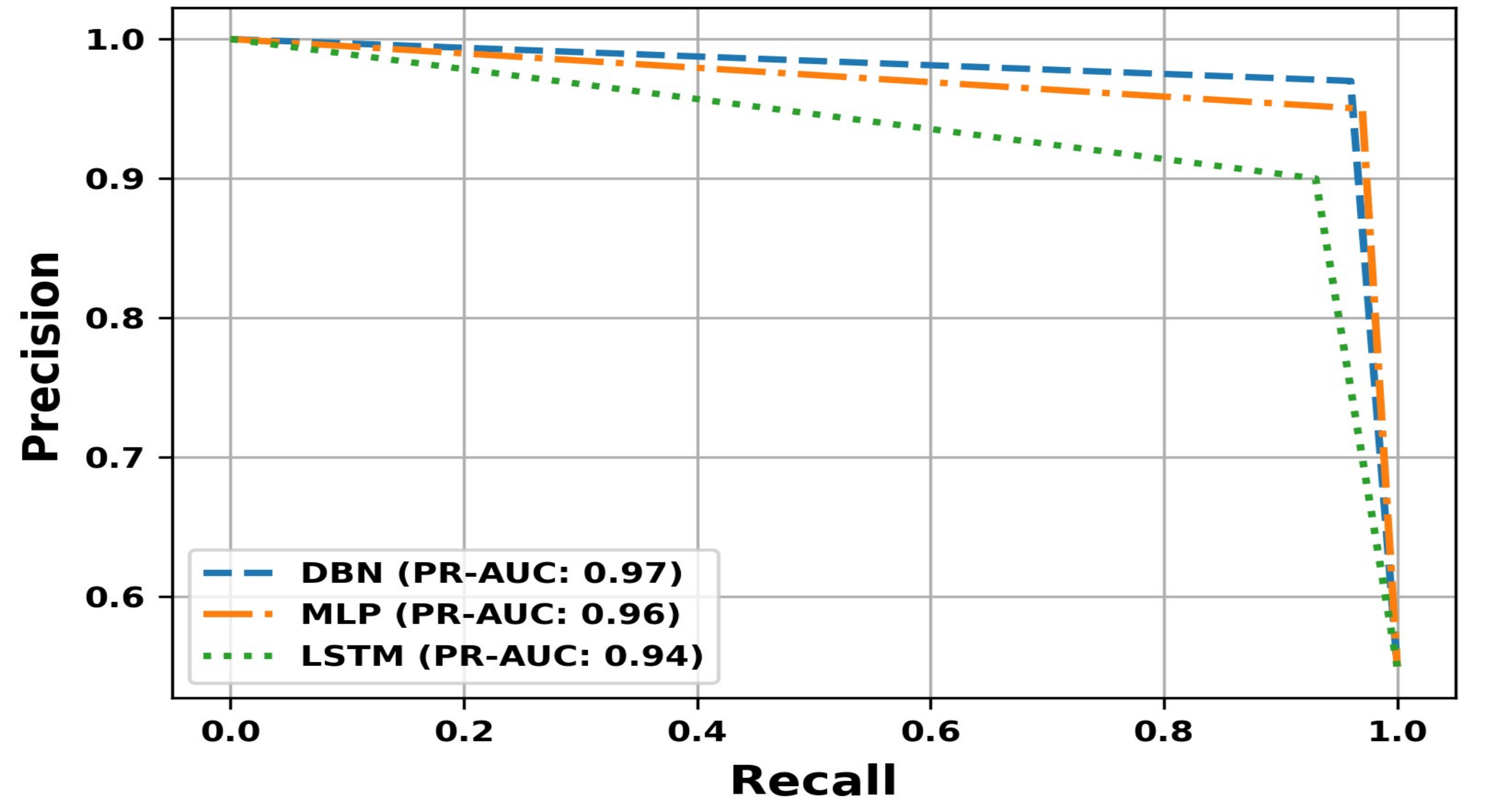}
    \caption{For TON-IoT network dataset}\label{PR_TON}
  \end{subfigure}
\caption{PR curve of models for detecting malicious network traffic, after being trained on labelled dataset generated by proposed model}
\end{figure}

Figure \ref{clacificationFigure} \& \ref{clacificationFigureTON} describe the comparison of results of various deep learning models, which were trained using NSL-KDD and TON-IoT network datasets (labelled by the proposed ensemble learning model). The results in the figure show that all the studied models produce acceptable results for detecting malicious attacks in IoT networks, this proves that the labelled dataset generated by the proposed ensemble learning model has effectively predicted the class label for each data entry in the unlabelled dataset. Also, it is evident from the testing results that DBN model outperforms the other with an attack detection accuracy of 97.6\% and has least false alarm rate of 2.3\% on both the datasets. 

Figure \ref{ROC} \& \ref{ROC_TON} show the Receiver Operating Characteristic (ROC) and figure \ref{PR} \& \ref{PR_TON} show the Precision-Recall (PR) curves for NSL-KDD \& TON-IoT network datasets respectively, with Area Under the Curve (AUC) values. The AUC value lies in the range of $[0,1]$ and a classifier whose AUC value is closer to 1 is considered efficient. Again, these figures also depict that DBN has better results with ROC-AUC  $= 0.98$ (for both the datasets) \& PR-AUC $= 0.98$ (for NSL-KDD) and $ 0.97$ (for TON-IoT network Dataset).

\begin{figure}[h]
\centering
\includegraphics[clip,scale=0.45]{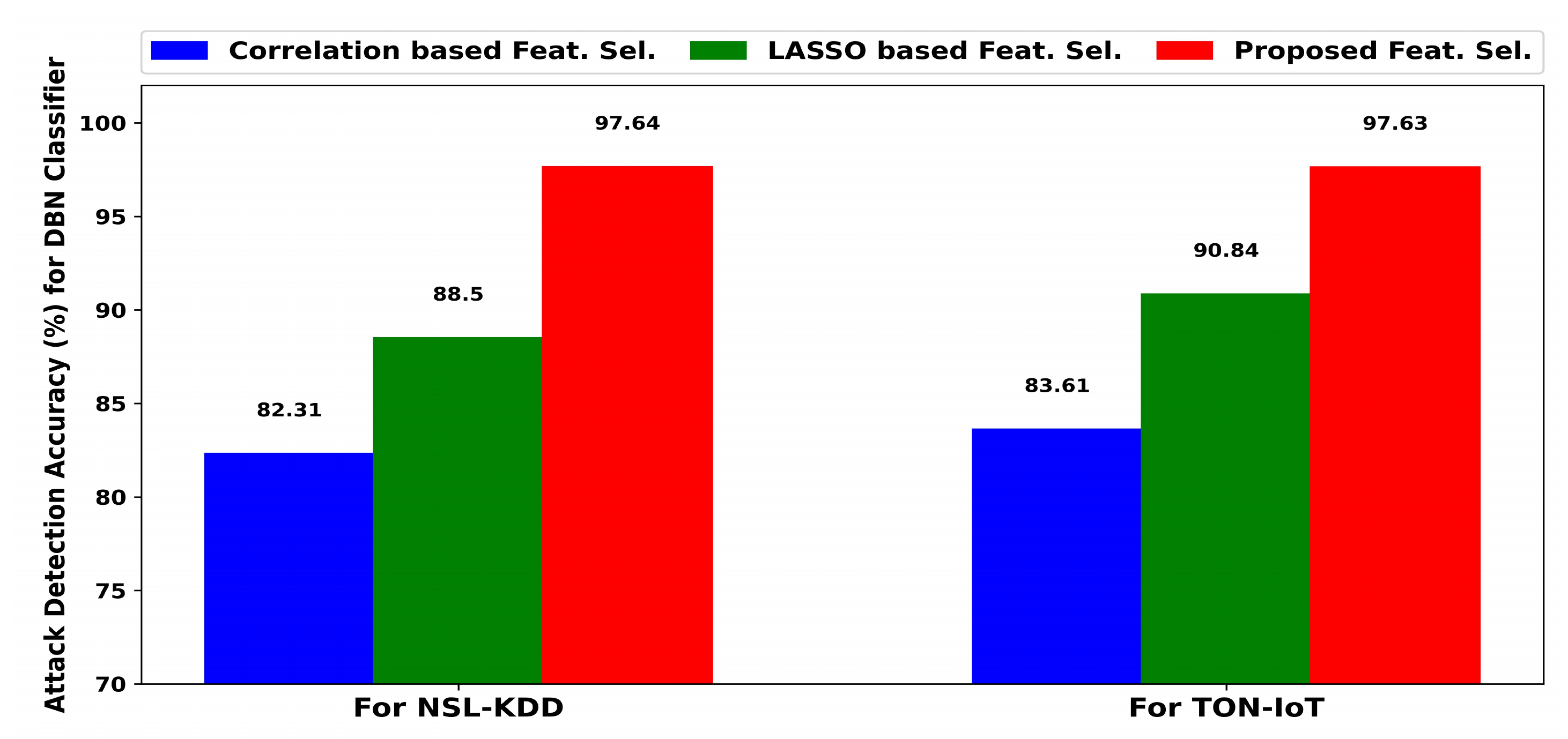}
\caption{Attack detection accuracy of DBN model, after being trained on labelled dataset generated by proposed model for the features selected by Correlation based, LASSO Reg. based and proposed feature selection mechanisms resp.}
\label{featSelSig}
\end{figure}

To show the significance of the proposed feature selection mechanism, which is an amalgamation of correlation based and LASSO reg. based feature selection mechanism, we have implemented three cases. In case-I, we have selected the features using correlation based feature selection mechanism (for NSL-KDD \& TON-IoT network datasets resp.), used proposed ensemble learning model to predict the class labels for malicious and non-malicious records, then train DBN model for detecting malicious attacks and note the detection accuracy. In case-II and III, again the same procedure is followed for LASSO regression based and proposed feature selection  mechanisms respectively. The attack detection accuracies for case-I, II \& III are shown in  figure \ref{featSelSig}. The study clearly indicates that the proposed feature selection helps to extract most relevant features from the datasets and thus helps to enhance attack detection accuracy.

\section{Conclusion \& Future Work}
\label{ConclusionSection}
In this paper, we proposed an unsupervised ensemble based learning model which is able to transform an unlabelled network dataset into labelled dataset, thereby predicting new/unknown attacks. The paper also proposed a feature selection mechanism which is able to discover relevant features in NSL-KDD \& TON-IoT network datasets for detecting malicious attacks, resulting in FRR of 73.17\% \& 79.07\% respectively. The proposed ensemble learning model first converts each unlabelled dataset into labelled dataset, then use it to train a deep learning model (separately for each dataset). In the study, we used LSTM, MLP \& DBN deep learning models and results show that DBN is more efficient than the other two studied deep learning models for detecting malicious attacks in an IoT network, with attack detection accuracy of 97.6\% and FAR of 2.3\% on both the studied datasets.

The proposed unsupervised ensemble based learning model can be implemented with fog computing architecture, where the model can be deployed at cloud layer to identify new/unknown attacks by analysing the unlabelled IoT network traffic. The network traffic labelled by the proposed model can then be used to train DBN deep learning model to detect network attack. The trained DBN can then be deployed at the fog layer to analyse the network traffic of edge devices and detect attacks, with periodic updates at the cloud layer for identifying new attacks. The use of fog computing architecture will decrease the load on fog as well as energy constrained IoT edge devices. In future, we will implement the model on a real world IoT network with fog computing architecture to further investigate the efficiency and complexity of the proposed model.

\section*{Acknowledgement}
We would like to thank TEQIP-III project and MITS, Gwalior for supporting this research.

\bibliographystyle{unsrt}  
\bibliography{CCPE_arxiv_mir_and_shah}

\end{document}